\newcommand{\E}{\mathbb E}
\newcommand{\Z}{\mathbb Z}
\newcommand{\N}{\mathbb N}
\newcommand{\Q}{\mathbb Q}
\newcommand{\K}{\mathbb K}
\newcommand{\A}{\mathbb A}
\newcommand{\ie}{i.e.,\ }
\newcommand{\dfield}[2]{({#1},{#2})}
\newcommand{\const}[2]{{\rm const}_{#2}{#1}} 
\newcommand{\mmod}{\operatorname{mod}}
\DeclareMathOperator*{\tprod}{{\textstyle\prod}}
\newcommand{\GG}{\mathbb G}
\newcommand{\F}{\mathbb F}
\newcommand{\HH}{\mathbb H}
\newcommand{\sigmaE}{$\Sigma$}
\newcommand{\piE}{$\Pi$}
\newcommand{\rE}{$R$}
\newcommand{\pisiE}{$\Pi\Sigma$}
\newcommand{\rpisiE}{$R\Pi\Sigma$}
\newcommand{\dfact}[3]{{#1}^{{#3},{#2}}}
\newcommand{\lr}[1]{\langle#1\rangle}
\newcommand{\vect}[1]{\boldsymbol{#1}}
\newcommand{\ssumB}[2]{\textstyle\sum\limits_{\scriptscriptstyle#1}^{\scriptscriptstyle#2}}
\newcommand{\sprodB}[2]{\textstyle\prod\limits_{\scriptscriptstyle#1}^{\scriptscriptstyle#2}}
\theoremstyle{plain}
\newtheorem{theorem}{Theorem}
\newtheorem{lemma}[theorem]{Lemma}
\newtheorem{corollary}[theorem]{Corollary}
\newtheorem{proposition}[theorem]{Proposition}
\theoremstyle{definition}
\newtheorem{definition}[theorem]{Definition}
\newtheorem{example}[theorem]{Example}
\theoremstyle{remark}
\newtheorem{remark}[theorem]{Remark}
\begin{document}

\author{Jakob Ablinger}
\email{Jakob.Ablinger@risc.jku.at}
\affiliation{%
	\institution{Johannes Kepler University Linz, RISC}
	\streetaddress{Altenberger Str. 69}
	\city{Linz}
	\country{Austria}
	\postcode{4232}
}

\author{Carsten Schneider}
\email{Carsten.Schneider@risc.jku.at}
\affiliation{%
	\institution{Johannes Kepler University Linz, RISC}
	\streetaddress{Altenberger Str. 69}
	\city{Linz}
	\country{Austria}
	\postcode{4232}
}

\title[Solving linear difference equations with coefficients in rings with idempotent representations]{{\footnotesize\tt\vspace*{-1.3cm}
		RISC-Linz Report Series No. 21-04}\\[0.2cm]Solving linear difference equations with coefficients in rings with idempotent representations}\titlenote{Supported by the Austrian Science Foundation (FWF) grant SFB F50 (F5009-N15).}

\begin{abstract}
We introduce a general reduction strategy that enables one to search for solutions of parameterized linear difference equations in difference rings. Here we assume that the ring itself can be decomposed by a direct sum of integral domains (using idempotent elements) that enjoys certain technical features and that the coefficients of the difference equation are not degenerated. Using this mechanism we can reduce the problem to find solutions in a ring (with zero-divisors) to search solutions in several copies of integral domains. Utilizing existing solvers in this integral domain setting, we obtain a general solver where the components of the linear difference equations and the solutions can be taken from difference rings that are built e.g., by $R\Pi\Sigma$-extensions over $\Pi\Sigma$-fields. This class of difference rings contains, e.g., nested sums and products, products over roots of unity and nested sums defined over such objects.
\end{abstract}

\keywords{linear difference equations, differnence rings, idempotent elements}

\maketitle

\section{Introduction}

In the following we denote by $\dfield{\E}{\sigma}$ a \emph{difference ring (resp.\ field)}, this means that $\E$ is a ring (resp.\ field) $\E$ equipped with a ring (resp.\ field) automorphism $\sigma:\E\to\E$. We call $\dfield{\E}{\sigma}$ \emph{computable} if the basic operations of $\E$ and $\sigma$ are computable. We define the \emph{ring of constants} of $\dfield{\E}{\sigma}$ by
$\K=\const{\E}{\sigma}=\{c\in\E\mid \sigma(c)=c\}.$
By construction $\K$ will be a field, called the \emph{constant field} of $\dfield{\E}{\sigma}$.

Given such a difference ring $\dfield{\E}{\sigma}$ with a constant field $\K$, we are interested in the following problem: Given $\vect{a}=(a_0,\dots,a_m)\in\E^{m+1}$ and $\vect{f}=(f_1,\dots,f_d)\in\E^d$, find (if this is possible) a finite representation of all solutions $g\in\E$ and $c_1,\dots,c_d\in\K$ of the \emph{parameterized linear difference equation} (in short PLDE)
\begin{equation}\label{Equ:PLDE}
a_0\,g+a_1\,\sigma(g)+\dots+a_m\,\sigma^m(g)=c_1\,f_1+\dots+c_d\,f_d
\end{equation}
with coefficients $\vect{a}$ and parameters $\vect{f}$.
The solution set is defined by
\begin{equation*}
V=V(\vect{a},\vect{f},\E)=\{(c_1,\dots,c_d,g)\in\K^d\times\E\mid\text{ \eqref{Equ:PLDE} holds}\}
\end{equation*}
which forms a $\K$-subspace of $\K^d\times\E$. We say that \emph{we can compute all solutions in $\dfield{\E}{\sigma}$ of an explicitly given~\eqref{Equ:PLDE}} if $V$ is a finite dimensional vector space and one can compute a basis of $V$.
In particular, if $\E$ is an integral domain and $a_0\,a_m\neq0$, we have $\dim(V)\leq m+n$ by~\cite[Thm.~XII (page 272)]{Cohn:65}. In this case we say that \emph{we can solve (in general) parameterized linear difference equations in $\dfield{\A}{\sigma}$} if one can compute a basis of $V(\vect{a},\vect{f},\E)$ for any $\vect{0}\neq\vect{a}\in\E^{m+1}$ and $\vect{f}\in\E^d$. 

The problem to solve PLDEs (so far only in a field or integral domain $\E$) plays a central rule in symbolic summation and various algorithms. It covers as special cases the telescoping problem ($\vect{a}=(1,-1)$, $\vect{f}\in\E^1$) for, e.g., hypergeometric products~\cite{Gosper:78}, the creative telescoping problem ($\vect{a}=(1,-1)$ with appropriately chosen $\vect{f}\in\E^d$) for, e.g., hypergeometric products~\cite{Zeilberger:91}, or recurrence solving ($d=1$) for, e.g., rational or hypergeometric solutions~\cite{Abramov:89a,Petkov:92,vanHoeij:99}. The parameterized version is used also in holonomic summation~\cite{Chyzak:00} and generalizations of it~\cite{BRS:16}. Further details can found, e.g., in~\cite{Schneider:21}.

In particular, Karr's pioneering summation algorithm~\cite{Karr:81} established a highly general solver for first-order PLDEs in the setting of his \pisiE-field extensions (Def.~\ref{Def:PSFieldExt}). In this way, the coefficients $a_i$, parameters $f_i$ and the solutions $g$ can be given in a \pisiE-field $\dfield{\E}{\sigma}$ that is built formally by indefinite nested sums and products. Only recently, his general first-order solver has been pushed forward in~\cite{ABPS:21} to the higher-order case (including also a solver to find all hypergeometric solutions over $\E$), that covers most of the summation algorithms mentioned above as special cases. 

In this article we aim at further generalizations allowing in addition difference rings that are built by basic \rpisiE-ring extensions~\cite{DR1,DR3} (Def.~\ref{Def:APSExt}) where also products over roots of unity like $(-1)^n$ can arise. Based on the observation that such rings can be decomposed by a direct sum of integral domains using idempotent elements (which is one of the key tools in the Galois theory of difference equations~\cite{Singer:97,Singer:08}), we will develop in Section~\ref{Sec:IPDR} a general strategy to solve non-degenerated PLDEs in idempotent difference rings (Def.~\ref{Def:IdemPotentDR}). Inspired by~\cite{Salvy,Mallinger} we separate the potential solutions in their different components (Thm.~\ref{Subops}) and try to combine them accordingly to the full solution (Thm.~\ref{Thm:IdempotentSolver}). Utilizing this machinery, we will invoke in Section~\ref{Sec:GeneralSolvers} the general \pisiE-field solver~\cite{ABPS:21} (and variants of it) implemented within the summation package~\texttt{Sigma}~\cite{Schneider:07a} to derive various new algorithms (see Theorems~\ref{Thm:QuotientRPSRingSolver} and~\ref{Thm:SpecialRPiSiSolver}) in order to solve non-degenerated PLDEs in basic \rpisiE-rings defined over \pisiE-field-extensions. As a special case, the ground field can be, e.g., the mixed multibasic difference field~\cite{Bauer:99} introduced in Remark~\ref{Remark:MixedCase}. 
After a concrete example in Section~\ref{Sec:Example} we conclude with Section~\ref{Sec:Conclusion}.

\newpage

\section{$\!\!\!$PLDEs in idempotent difference rings}\label{Sec:IPDR}

It will be convenient to denote by $s\mmod \lambda$ with $s\in\Z$ the unique value $l\in\{0,\dots,\lambda-1\}$ with $\lambda\mid s-l$.
\begin{definition}\label{Def:IdemPotentDR}
 Let $\dfield{\E}{\sigma}$ be a difference ring and let $e_s\in\E$ with $0\leq s<\lambda$
 be elements such that
 \begin{itemize}
  \item they are idempotent (i.e., $e_s^2=e_s$),
  \item pairwise orthogonal (i.e., $e_s e_t=0$ if $s\neq t$),
  \item and $\sigma(e_s)=e_{s+1\mmod \lambda}.$
 \end{itemize}
If $\dfield{\E}{\sigma}$ can be decomposed in the form 
\begin{equation}\label{Equ:decomposition}
\E=e_0\,\E\oplus e_2\,\E\oplus\dots\oplus e_{\lambda-1}\,\E
\end{equation}
such that $e_i\E$ forms an computable integral domain, then $\dfield{\E}{\sigma}$ is called an \textit{idempotent difference ring of order $\lambda$}.
\end{definition}
\noindent Note that, if $\dfield{\E}{\sigma}$ is an idempotent difference ring of order $\lambda$ then $\dfield{e_s\,\E}{\sigma^\lambda}$ is a difference ring and $\sigma$ is a difference ring isomorphism\footnote{A difference isomorphism $\tau:\A_1\to\A_2$ between two difference rings $\dfield{\A_i}{\sigma_i}$ with $i=1,2$ is a ring isomorphism with $\tau(\sigma_1(f))=\sigma_2(\tau(f))$ for all $f\in\A_1$.} between $\dfield{e_s\,\E}{\sigma^\lambda}$ and $\dfield{e_{s+1\mmod \lambda}\,\E}{\sigma^\lambda}$.
\begin{lemma}\label{Lemma:componentshifts}
 Let $\dfield{\E}{\sigma}$ be an idempotent difference ring of order $\lambda$ and let $g=\sum_{s=0}^{\lambda-1}e_s\,g_s\in\E$, then applying $\sigma$ means that the component $e_sg_s$ is moved cyclically to $\dfield{e_{{s+1}\mmod \lambda}\,\E}{\sigma^\lambda}$
\end{lemma}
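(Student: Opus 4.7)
The statement is essentially a direct computation, so my plan is to apply $\sigma$ term by term to the decomposition of $g$ and read off where each component lands. More precisely, I would first invoke additivity of $\sigma$ to write
\[
\sigma(g)=\sum_{s=0}^{\lambda-1}\sigma(e_s\,g_s),
\]
then multiplicativity of $\sigma$ (since $\sigma$ is a ring homomorphism) to get $\sigma(e_s\,g_s)=\sigma(e_s)\,\sigma(g_s)$, and finally use the defining property $\sigma(e_s)=e_{s+1\mmod\lambda}$ from Definition~\ref{Def:IdemPotentDR} to conclude
\[
\sigma(g)=\sum_{s=0}^{\lambda-1}e_{s+1\mmod\lambda}\,\sigma(g_s).
\]

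From this identity the geometric content of the lemma is immediate: the summand $e_s\,g_s\in e_s\E$ is sent by $\sigma$ to $e_{s+1\mmod\lambda}\,\sigma(g_s)$, which manifestly lies in $e_{s+1\mmod\lambda}\E$. Thus the $s$-th component of the direct-sum decomposition~\eqref{Equ:decomposition} is moved cyclically to the $(s+1\mmod\lambda)$-th component, as claimed. To make the assertion about $\sigma^\lambda$ precise, I would then note that iterating $\lambda$ times brings each component back to itself, so $\sigma^\lambda$ restricts to a ring endomorphism of $e_s\E$, which together with the isomorphism property of $\sigma$ already remarked after Definition~\ref{Def:IdemPotentDR} yields the difference ring structure $\dfield{e_s\E}{\sigma^\lambda}$ referenced in the statement.

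There is no real obstacle here; the only thing one must be careful about is that $g_s$ itself need not belong to $e_s\E$, but the orthogonality/idempotency conditions guarantee that only the component $e_s\,g_s$ enters the decomposition in a well-defined way, and correspondingly only $e_{s+1\mmod\lambda}\,\sigma(g_s)$ enters after shifting, so the bookkeeping of which summand lands where is unambiguous.
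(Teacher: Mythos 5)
Your proof is correct and follows essentially the same route as the paper: apply $\sigma$ as a ring homomorphism and use $\sigma(e_s)=e_{s+1\mmod\lambda}$ to see that $\sigma(e_s g_s)=e_{s+1\mmod\lambda}\sigma(g_s)\in e_{s+1\mmod\lambda}\E$. The only cosmetic difference is that the paper additionally expands $g_s=\sum_{i}e_i h_i$ and invokes orthogonality to identify the image as $e_{s+1\mmod\lambda}\sigma(h_s)$, a step your more direct membership argument renders unnecessary.
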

\begin{proof}
Fix $s$ with $0\leq s<\lambda$, since $g_s\in\E$ we can write $g_s=\sum_{i=0}^{\lambda-1}e_i\,h_i$ for some $h_i\in\E$. Now applying $\sigma$ to $e_sg_s$ gives:
\begin{align*}
 \sigma(e_sg_s)&=\sigma\left(e_s\sum_{i=0}^{\lambda-1}e_i\,h_i\right)=\sigma\left(e_s\right)\sum_{i=0}^{\lambda-1}\sigma\left(e_i\right)\sigma\left(h_i\right)\\
 &=e_{s+1\mmod \lambda}\sum_{i=0}^{\lambda-1}e_{i+1\mmod \lambda}\sigma\left(h_i\right)=e_{s+1\mmod \lambda}\sigma\left(h_s\right).
\end{align*}
Since $\sigma(h_s)\in\E$ we have that $\sigma(e_sg_s)\in e_{s+1\mmod \lambda}\E$.
\end{proof}
\noindent For an idempotent difference ring $\dfield{\E}{\sigma}$ of order $\lambda$, with idempotent elements $e_s\in\E$ with $0\leq s<\lambda$ the structure given by Lemma~\ref{Lemma:componentshifts} can be illustrated as follows:
\begin{align*}
\xymatrix@!C=0.01cm{
\E\,= \,e_0\,\E\ar@/^1pc/^{\sigma}[rr] \quad & \oplus& e_1\,\E\ar@/^1pc/^{\sigma}[rr]&\oplus&\dots\ar@/^1pc/^{\sigma}[rr]&\oplus& e_{\lambda-2}\,\E\ar@/^1 pc/^{\sigma}[rr]&
\oplus& e_{\lambda-1}\,\E\ar@/^{1.3pc}/^{\sigma}[llllllll]}.
\end{align*}
The following lemma is immediate.
\begin{lemma}\label{Lemma:cyclicshift}
Let $\dfield{\E}{\sigma}$ be an idempotent difference ring of order $\lambda$ and let $g=\sum_{s=0}^{\lambda-1}e_s\,g_s\in\E$ and $j\in\N$ then 
\begin{align}
\sigma^j(g)=\sum_{s=0}^{\lambda-1}e_{s+j\mmod \lambda}\sigma^j(g_{s})=\sum_{s=0}^{\lambda-1}e_s\sigma^j(g_{s-j\mmod \lambda}).
\end{align}
\end{lemma}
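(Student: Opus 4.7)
The plan is to prove this by straightforward induction on $j \in \N$, using Lemma~\ref{Lemma:componentshifts} as the essential single-step building block, followed by a reindexing of the sum to obtain the second form.

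For the base case $j=0$, both sides reduce to $g = \sum_{s=0}^{\lambda-1}e_s g_s$ (noting that $-0 \mmod \lambda = 0$), which is immediate. For the inductive step, assume the first equality holds for some $j$, and apply $\sigma$ to both sides. Since $\sigma$ is a ring homomorphism,
\begin{equation*}
\sigma^{j+1}(g) = \sigma\!\left(\sum_{s=0}^{\lambda-1} e_{s+j\mmod\lambda}\,\sigma^{j}(g_{s})\right) = \sum_{s=0}^{\lambda-1} \sigma\!\left(e_{s+j\mmod\lambda}\,\sigma^{j}(g_{s})\right).
\end{equation*}
Here each summand $e_{s+j\mmod\lambda}\,\sigma^{j}(g_{s})$ lies in the component $e_{s+j\mmod\lambda}\E$, so by Lemma~\ref{Lemma:componentshifts} (applied to this component) and the defining property $\sigma(e_t) = e_{t+1\mmod\lambda}$, the image sits in the next component: it equals $e_{s+j+1\mmod\lambda}\,\sigma^{j+1}(g_s)$. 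This yields the first equality at stage $j+1$.

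To obtain the second equality, I reindex the first sum by the bijection $s \mapsto s - j \mmod \lambda$ on $\{0,\dots,\lambda-1\}$. Under this substitution the index $s + j \mmod \lambda$ becomes simply $s$, and $g_s$ becomes $g_{s - j \mmod \lambda}$, giving
\begin{equation*}
\sigma^j(g) = \sum_{s=0}^{\lambda-1} e_s\,\sigma^j\!\left(g_{s - j \mmod \lambda}\right),
\end{equation*}
as desired.

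There is no real obstacle here; the only subtle point is to realize that Lemma~\ref{Lemma:componentshifts} applies to each summand \emph{regardless} of what $\sigma^j(g_s)$ looks like internally — the argument of Lemma~\ref{Lemma:componentshifts} depends only on the idempotent prefactor $e_{s+j\mmod\lambda}$ — so the ``shift by one component'' behavior is preserved at every step. The reindexing for the second identity is purely formal, using that addition/subtraction of $j$ modulo $\lambda$ is a bijection on the index set.
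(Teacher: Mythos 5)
Your proof is correct; the paper itself gives no proof, declaring the lemma ``immediate,'' and your induction merely spells out the intended one-line argument that $\sigma^j(e_s)=e_{s+j\mmod\lambda}$ (by iterating $\sigma(e_s)=e_{s+1\mmod\lambda}$) combined with the ring-homomorphism property and a reindexing of the sum. The only cosmetic point is that the explicit identity $\sigma(e_t\,u)=e_{t+1\mmod\lambda}\,\sigma(u)$ follows directly from $\sigma$ being a homomorphism rather than from Lemma~\ref{Lemma:componentshifts}, which only asserts membership in the next component --- but you cite the defining property as well, so the step is fully justified.
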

\begin{definition}
Let $\dfield{\E}{\sigma}$ be an idempotent difference ring of order $\lambda$ with idempotent elements $e_s\in\E$ with $0\leq s<\lambda$. Then
 $\pi:\E\to\E$ with $\pi(g)\mapsto g_0$ where $g=\sum_{s=0}^{\lambda-1} e_s g_s$ is called a \textit{projection}.
\end{definition}
\noindent In this article we will always consider the projection on the first component, however each projection to an arbitrary component would do the job. The following lemma summarizes several properties of the projection. 
\begin{lemma}\label{Lemma:pi_props}
Let $\dfield{\E}{\sigma}$ be an idempotent difference ring of order $\lambda$ with idempotent elements $e_s\in\E$ with $0\leq s<\lambda$ and let  $\pi:\E\to\E$ be a projection. For $g,h\in\E$ we have
\begin{equation}
\pi(g+h)=\pi(g)+\pi(h)\quad \text{ and } \quad \pi(g\cdot h)=\pi(g)\cdot\pi(h).
\end{equation}
In addition, for $j\in \N$ and $0\leq s<\lambda$ we have
\begin{equation}
\pi(\sigma^j(e_s))=
    \begin{cases}
        1&\text{if }s+j= 0 \pmod \lambda\\
        0&\text{if }s+j\neq 0 \pmod \lambda,
    \end{cases}
\end{equation}
and for $j\in \N$ and $g=\sum_{s=0}^{\lambda-1} e_s g_s$ we have
\begin{equation}
\pi(g)=e_0 g\quad \text{ and } \quad \pi(\sigma^j(g))= \sigma^j(g_{-j\mmod \lambda})
\end{equation}
\end{lemma}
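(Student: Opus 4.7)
The plan is to reduce every assertion in the lemma to the single identity $\pi(g) = e_0\, g$, which follows directly from the definition of $\pi$ together with the idempotence and orthogonality of the $e_s$. Once this is in hand, the remaining properties are short consequences of the ring laws and of Lemma~\ref{Lemma:cyclicshift}.

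First I would take an arbitrary $g \in \E$ in the form $g = \sum_{s=0}^{\lambda-1} e_s g_s$ supplied by the decomposition~\eqref{Equ:decomposition}. Multiplying by $e_0$ and using $e_0^2 = e_0$ together with $e_0 e_s = 0$ for $s \neq 0$ collapses the sum to $e_0 g_0$, which is precisely the zeroth component and therefore equals $\pi(g)$. With this identity, additivity is immediate from distributivity, and multiplicativity uses idempotence once more: $\pi(g)\pi(h) = (e_0 g)(e_0 h) = e_0^2\, g h = e_0\, g h = \pi(g h)$.

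For the action of $\pi$ on $\sigma^j(e_s)$, I would iterate the defining rule $\sigma(e_s) = e_{s+1 \mmod \lambda}$ to obtain $\sigma^j(e_s) = e_{s+j \mmod \lambda}$; left multiplication by $e_0$ then yields $e_0$ (the multiplicative identity of the component ring $e_0\E$, written here as $1$) exactly when $s+j \equiv 0 \pmod \lambda$, and $0$ otherwise by orthogonality. For the last assertion I would apply $\pi$ to the explicit formula in Lemma~\ref{Lemma:cyclicshift}: multiplying $\sum_{s=0}^{\lambda-1} e_s \sigma^j(g_{s-j \mmod \lambda})$ by $e_0$ kills every summand but the $s=0$ term, leaving $e_0\, \sigma^j(g_{-j \mmod \lambda})$, read as $\sigma^j(g_{-j \mmod \lambda})$ under the convention that elements of $e_0\E$ are identified with their representatives in $\E$.

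No step presents a real obstacle; the proof is formal bookkeeping resting on $e_s^2 = e_s$, $e_s e_t = 0$ for $s \neq t$, and the cyclic shift rule. The only point requiring care is the notational convention just mentioned, under which $e_0$ plays the role of the identity $1$ in $e_0\E$ and components $e_0 h \in e_0\E$ are written as $h$.
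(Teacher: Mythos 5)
Your proof is correct and follows essentially the same route as the paper's: both arguments are the componentwise bookkeeping with $e_s^2=e_s$, $e_se_t=0$, and Lemma~\ref{Lemma:cyclicshift}, differing only in that you first isolate $\pi(g)=e_0g$ as the master identity and deduce the rest from it, while the paper computes each item directly on the components $g_s$. Your explicit remark about identifying $g_0$ with $e_0g_0$ (and $e_0$ with the identity of $e_0\E$) is a welcome clarification of a convention the paper leaves implicit.
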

\begin{proof}
 Let $g=\sum_{s=0}^{\lambda-1}e_s\,g_s\in\E$ and $h=\sum_{s=0}^{\lambda-1}e_s\,h_s\in\E$ then $g+h=\sum_{s=0}^{\lambda-1}e_s\,(g_s+h_s)\in\E$ and hence $\pi(g+h)=g_0+h_0=\pi(g)+\pi(h)$. Similarly, since $g\cdot h=\sum_{s=0}^{\lambda-1}e_s\,(g_s\cdot h_s)\in\E$ we have $\pi(g\cdot h)=g_0\cdot h_0=\pi(g)\cdot\pi(h)$. For $j\in\N,0\leq s<\lambda$ we have that $\sigma^j(e_s)=e_{s+j\mmod \lambda}$, hence $\pi(\sigma^j(e_s))=\pi(e_{s+j\mmod \lambda})$ which clearly evaluates to 1 if $s+j=0\pmod \lambda$ and to 0 if $s+j\neq 0\pmod \lambda$. Finally, from Lemma~\ref{Lemma:cyclicshift} we know that $\sigma^j(g)=\sum_{s=0}^{\lambda-1}e_s\sigma^j(g_{s-j\mmod \lambda})$, hence $\pi(\sigma^j(g))=\sum_{s=0}^{\lambda-1}\pi(e_s)\pi(\sigma^j(g_{s-j\mmod \lambda}))=\pi(\sigma^j(g_{-j\mmod \lambda}))$. Since $\pi(\sigma^j(g_{-j\mmod \lambda}))=e_0\sigma^j(g_{-j\mmod \lambda})=\sigma^j(e_{-j\mmod \lambda}g_{-j\mmod \lambda})$ we have that
 $\sigma^j(g)=\sigma^j(g_{-j\mmod \lambda})$.
\end{proof}
\begin{definition}
Let $\dfield{\E}{\sigma}$ be an idempotent difference ring of order $\lambda$ and let $\pi:\E\to\E$ be a projection. For $\vect{a}=(a_0,a_1,\dots,a_m)\in\E^{m+1}$
we define the $(m+1)\lambda-m \times (m+1)\lambda$ \textit{shift projection} matrix by 
\begin{flalign*}
M_{\sigma,\pi}(\vect{a}):=&&
\end{flalign*}
{\tiny
\begin{align*}
    &\begin{pmatrix} 
    \pi(p_0) &         \pi(p_1) & \cdots &   \pi(p_m) &    0 & 0 &  \cdots &                       0 \\
    0        & \pi(\sigma(p_0)) & \cdots & \pi(\sigma(p_{m-1})) &   \pi(\sigma(p_m)) & 0 &  \cdots & 0 \\
    \vdots&  & \ddots\\
    0        &   0 & \cdots &    0 & \pi(\sigma^{k}(p_0)) &   &  \cdots & \pi(\sigma^{k}(p_m)) 
    \end{pmatrix},
\end{align*}
}%
where $k:=(m+1)\lambda-m-1.$
\end{definition}
\begin{definition}\label{Def:NonDegeneratedOp}
Let $\dfield{\E}{\sigma}$ be an idempotent difference ring of order $\lambda$ and let $\pi:\E\to\E$ be a projection. A vector $\vect{a}=(a_0,a_1,\dots,a_m)\in\E^{m+1}$ is called \textit{non-degenerate} if the shift projection matrix $M_{\sigma,\pi}(\vect{a})$ has full rank, \ie the rows are linearly independent.
Likewise, a  linear difference operator $\sum_{i=0}^m a_i\sigma^i\in\E[\sigma]$ with $a_i\in\E$ is called \textit{non-degenerate} if $\vect{a}$ is \textit{non-degenerate}.
\end{definition}
\noindent Note, that for instance a linear difference operator $L=\sum_{i=0}^m a_i\sigma^i\in\E[\sigma]$ that is a multiple of an idempotent element $e_i$ \ie $e_i\mid a_i$ for all $0\leq i\leq m$ is not non-degenerate, since for such an operator the shift projection matrix would contain a zero row. Similarly, $L$ for which all coefficients vanish for a certain component is as well degenerate, since for such an operator the shift projection matrix would contain $m+1$ zero columns, see Example \ref{Ex:Degenerate} below.\\
In the following lemma, we state an immediate criterion which implies that a linear difference operator is \textit{non-degenerate}.
\begin{lemma}
 Let $\dfield{\E}{\sigma}$ be an idempotent difference ring of order $\lambda$ and let $\pi:\E\to\E$ be a projection. A linear difference operator 
$L:=\sum_{i=0}^mp_i\sigma^i\in\E[\sigma]$, with $a_i\in \E$,
is \textit{non-degenerate} if either $a_m$ or $a_0$ is a unit in $\E$.
\end{lemma}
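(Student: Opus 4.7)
The plan is to exploit the staircase shape of $M_{\sigma,\pi}(\vect{a})$. Row $j$ (for $j=0,\dots,k$ with $k=(m+1)\lambda-m-1$) has nonzero support confined to columns $j,j+1,\dots,j+m$, with leading entry $\pi(\sigma^j(a_0))$ in column $j$ and trailing entry $\pi(\sigma^j(a_m))$ in column $j+m$. Both the leading and trailing column indices strictly increase with $j$. Thus, if we can guarantee that every trailing entry $\pi(\sigma^j(a_m))$ is a nonzero element of the integral domain $e_0\E$ (respectively, that every leading entry $\pi(\sigma^j(a_0))$ is nonzero), then the matrix is in upper- (respectively lower-) echelon shape and its rows are linearly independent over the field of fractions of $e_0\E$.

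The key auxiliary observation is that a unit of $\E$ decomposes componentwise into units of each integral domain $e_s\E$. Indeed, if $u \in \E$ is a unit with inverse $v$, write $u=\sum_{s=0}^{\lambda-1}e_s u_s$ and $v=\sum_{s=0}^{\lambda-1}e_s v_s$. Multiplying $uv=1=\sum_s e_s$ and using $e_s e_t = \delta_{st}e_s$ yields $e_s u_s v_s = e_s$ for each $s$. In particular, $e_s u_s$ is a unit of $e_s\E$ and hence nonzero.

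Now suppose $a_m$ is a unit. By the previous paragraph, each component $e_s\alpha_s$ of $a_m=\sum_s e_s\alpha_s$ is nonzero. By Lemma~\ref{Lemma:pi_props},
\[
\pi(\sigma^j(a_m)) = \sigma^j\!\bigl(e_{-j\mmod\lambda}\,\alpha_{-j\mmod\lambda}\bigr).
\]
Because $e_{-j\mmod\lambda}\alpha_{-j\mmod\lambda}\neq 0$ and $\sigma$ is a ring automorphism of $\E$ (hence injective), $\pi(\sigma^j(a_m))\neq 0$ for every $j\in\{0,\dots,k\}$. Since these trailing entries live in the integral domain $e_0\E$, a nontrivial relation $\sum_{j=0}^{k}c_j R_j=\vect 0$ among the rows would, after selecting the largest index $j_0$ with $c_{j_0}\neq 0$, yield $c_{j_0}\pi(\sigma^{j_0}(a_m))=0$ in column $j_0+m$, contradicting the absence of zero divisors in $e_0\E$.

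The case where $a_0$ is a unit is symmetric: then $\pi(\sigma^j(a_0))\neq 0$ for all $j$, and one inspects the smallest index $j_0$ with a nonzero coefficient in a putative relation, using the leading entries in column $j_0$. I do not anticipate a real obstacle here; the only substantive point is the componentwise decomposition of units, and everything else is the standard triangularity argument adapted to the integral-domain setting $e_0\E$.
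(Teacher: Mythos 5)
Your proof is correct, and it supplies exactly the argument the paper omits: the lemma is stated there as an ``immediate criterion'' with no proof given. The two ingredients you isolate---that a unit of $\E$ has a nonzero (indeed unit) component in every $e_s\E$, and that the banded shape of $M_{\sigma,\pi}(\vect{a})$ with nonzero trailing (resp.\ leading) entries in the integral domain $e_0\E$ forces the rows to be independent---are precisely what is needed, and the formula $\pi(\sigma^j(a_m))=\sigma^j(e_{-j\mmod\lambda}\,\alpha_{-j\mmod\lambda})$ from Lemma~\ref{Lemma:pi_props} is applied correctly.
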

\noindent Given a non-degenerate linear difference operator, the following theorem shows, that it is possible to define non-zero linear difference operators for each component. It is inspired by \cite{Salvy,Mallinger}. 

\begin{theorem}\label{Subops}
Let $\dfield{\E}{\sigma}$ be an idempotent difference ring of order $\lambda$ with idempotent elements $e_s\in\E$ with $0\leq s<\lambda$, let  $\pi:\E\to\E$ be a projection and let $\vect{a}=(a_0,\dots,a_{m})\in\E^{m+1}$ with $a_m\neq 0$ be non-degenerated. Consider the linear difference equation 
\begin{align}\label{Equ:lde}
\sum_{i=0}^ma_i\sigma^i(g)=\varphi.
\end{align}
with $\varphi\in \E$, which is satisfied by $g=\sum_{s=0}^{\lambda-1}e_s\,g_s\in\E$ and let $k\in \N$ with $0 \leq k<\lambda$. Then there exist
$b_{k,i}\in e_k\E$, not all zero, and $\varphi_k\in e_k\E$ such that
\begin{align}\label{Equ:sublde}
\sum_{i=0}^mb_{k,i}(\sigma^\lambda)^i(g_k)=\varphi_k.
\end{align}
If $\dfield{\E}{\sigma}$ is computable, then the $b_{k,i}$ and $\varphi_k$ can be computed. 
\end{theorem}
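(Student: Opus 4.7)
The plan is to reduce~\eqref{Equ:lde} to a scalar linear system over the integral domain $e_0\E$ by projecting its shifted instances via $\pi$, then isolate the contribution of $g_k$ by a rank-driven elimination of the other components, and finally translate the resulting relation into the ring $e_k\E$ in the required normal form.

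For $j=0,1,\dots,(m+1)\lambda-m-1$, applying $\pi\circ\sigma^j$ to~\eqref{Equ:lde} produces, by additivity and multiplicativity of $\pi$ (Lemma~\ref{Lemma:pi_props}),
\[
\sum_{i=0}^m \pi(\sigma^j(a_i))\,\pi(\sigma^{i+j}(g))=\pi(\sigma^j(\varphi)).
\]
Stacking these $(m+1)\lambda-m$ rows gives precisely the linear system $M_{\sigma,\pi}(\vect{a})\,\vect{x}=\vect{y}$ over $e_0\E$, where $\vect{x}$ collects the $(m+1)\lambda$ quantities $\pi(\sigma^n(g))=\sigma^n(g_{-n\mmod\lambda})$ for $n=0,\dots,(m+1)\lambda-1$. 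In particular, column $n$ involves only the single component $g_{-n\mmod\lambda}$, so the columns of $M_{\sigma,\pi}(\vect{a})$ partition into $\lambda$ groups of $m+1$ columns, one per component $g_t$; the group for $g_k$ is the one with $n\equiv-k\pmod{\lambda}$.

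Let $M_k$ denote the submatrix of $M_{\sigma,\pi}(\vect{a})$ obtained by deleting those $m+1$ columns. Since $M_k$ has $(m+1)(\lambda-1)$ columns while $M_{\sigma,\pi}(\vect{a})$ has $(m+1)\lambda-m=(m+1)(\lambda-1)+1$ rows, the left null space of $M_k$ over the fraction field of the integral domain $e_0\E$ has dimension at least one; after clearing denominators I obtain a non-zero vector $\vect{v}\in(e_0\E)^{(m+1)\lambda-m}$ with $\vect{v}^\top M_k=\vect{0}$. Multiplying the full system from the left by $\vect{v}^\top$ then annihilates every $g_t$ with $t\neq k$ and leaves a relation $\sum_q c_q\,\sigma^{\lambda q-k}(g_k)=\phi$ in $e_0\E$, whose coefficients $c_q$ are the surviving entries of the row $\vect{v}^\top M_{\sigma,\pi}(\vect{a})$. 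The decisive obstacle is ruling out $c_q\equiv 0$; here the non-degeneracy hypothesis enters, since it asserts that $M_{\sigma,\pi}(\vect{a})$ has full row rank, and consequently $\vect{v}^\top M_{\sigma,\pi}(\vect{a})=\vect{0}$ would force $\vect{v}=\vect{0}$, contradicting the construction of $\vect{v}$.

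Finally, a suitable power of $\sigma$ (the identity when $k=0$, and the automorphism $\sigma^{k-\lambda}$ together with the reindexing $i=q-1$ when $k\geq 1$) carries the relation isomorphically from $e_0\E$ onto $e_k\E$ and converts each $\sigma^{\lambda q-k}(g_k)$ into $(\sigma^\lambda)^i(g_k)$ for $i=0,\dots,m$, yielding
\[
\sum_{i=0}^m b_{k,i}\,(\sigma^\lambda)^i(g_k)=\varphi_k
\]
with $b_{k,i},\varphi_k\in e_k\E$ and not all $b_{k,i}$ zero. Computability is then immediate: once $\dfield{\E}{\sigma}$ is computable, building $M_{\sigma,\pi}(\vect{a})$, producing a left null vector of $M_k$ by linear algebra over the computable integral domain $e_0\E$, and applying the appropriate shift automorphism are all algorithmic, so the $b_{k,i}$ and $\varphi_k$ can be computed.
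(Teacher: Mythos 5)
Your proof is correct and follows the paper's argument in its essentials: you apply $\pi\circ\sigma^j$ to the shifted equations to obtain the linear system $M_{\sigma,\pi}(\vect{a})\vect{x}=\vect{y}$ over the integral domain $e_0\E$ and exploit the full-row-rank hypothesis to eliminate all components $g_t$ with $t\neq k$. The only differences are cosmetic: you carry out the elimination with a left null vector of the column block complementary to $g_k$ (rather than solving the underdetermined system in terms of $m$ free variables and then fitting the ansatz~\eqref{Equ:sublde}), and you make explicit the final transport by $\sigma^{k-\lambda}$ from $e_0\E$ to $e_k\E$ together with the reindexing $i=q-1$, a step the paper leaves implicit.
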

\begin{proof}
From~\eqref{Equ:lde} we can deduce for $j\in\N$ that 
\begin{align}\label{Equ:recreplaced}
\sigma^j\left(\sum_{i=0}^m a_i \sigma^i(g)\right)=\sigma^j \left(\varphi \right)
\end{align}
or equivalently 
\begin{align*}
\sum_{i=0}^m \sigma^j(a_i) \left(\sigma^{i+j}(e_0) \sigma^{i+j}(g_0) + \cdots\right.& \\
\left.\dots+ \sigma^{i+j}(e_{\lambda-1})\sigma^{i+j}(g_{\lambda-1})\right)&=\sigma^j \left(\varphi \right).
\end{align*}
Applying the projection $\pi$ and using Lemma~\ref{Lemma:pi_props} yields
\begin{align*}
\sum_{i=0}^m \pi(\sigma^j(a_i)) \pi(\sigma^{i+j}(g_{-(i+j)\mmod{\lambda}}))=\pi\left(\sigma^j \left(\varphi \right)\right),
\end{align*}
since for $1\leq l<\lambda$,
\begin{equation*}
\pi(\sigma^{i+j}(e_l))=
    \begin{cases}
        1&\text{if }l= -(i+j) \pmod \lambda\\
        0&\text{if }l\neq -(i+j) \pmod \lambda.
    \end{cases}
\end{equation*}
Now, by Lemma~\ref{Lemma:cyclicshift} and Lemma~\ref{Lemma:pi_props} we find
\begin{align}\label{Equ:equations}
\sum_{i=0}^m \pi(\sigma^j(a_i)) \sigma^{i+j}(g_{-(i+j)\mmod{\lambda}})=\pi\left(\sigma^j \left(\varphi \right)\right).
\end{align}
Now, plugging in $j=0,1,2,\ldots, (m+1)\lambda-m-1$ into~\eqref{Equ:equations} yields the linear system
\begin{align}\label{Equ:LinearSystem}
 M_{\sigma,\pi}(\vect{a})\cdot
\begin{pmatrix}
                    \sigma^0(g_{0\mmod{\lambda}})\\
                    \sigma^1(g_{-1\mmod{\lambda}})\\
                    \sigma^2(g_{-2\mmod{\lambda}})\\
                    \vdots\\
                    \sigma^{\nu}(g_{-k\mmod{\lambda}})\\
\end{pmatrix}=
                    \begin{pmatrix}
                    \pi(\sigma^0(\varphi))\\
                    \pi(\sigma^1(\varphi))\\
                    \pi(\sigma^2(\varphi))\\
                    \vdots\\
                    \pi(\sigma^{\nu}(\varphi))\\
                    \end{pmatrix},
\end{align}
where $\nu:=(m+1)\lambda-m-1$.
Since $\vect{a}$ is non-degenerate and hence $M_{\sigma,\pi}(\vect{a})$ has full rank, we can solve this system in terms of $m$ variables. Finally, we can plug this solution into~\eqref{Equ:sublde}. Since this leads to a linear system of at most $m+1$ equations in $m+2$ variables, which has a nontrivial solution, we can determine the coefficients $b_{k,i}$ and $\varphi_k$ of~\eqref{Equ:sublde}. In particular, if $\E$ is computable, the $b_{k,i}$ and $\varphi_k$ can be computed.
\end{proof}
%

\begin{remark}\label{Remark:PropertyOfRHS}
Let $\dfield{\E}{\sigma}$ be a field extension of a difference ring $\dfield{\A}{\sigma'}$, i.e., $\A$ is a subring of $\E$ and $\sigma|_{\A}=\sigma'$, and suppose that the $\vect{a}\in\A^{m+1}$ and $\phi\in\E$. 
Then, since we plug solutions of the linear system~\eqref{Equ:LinearSystem} into~\eqref{Equ:sublde}, the right-hand sides in~\eqref{Equ:sublde} have the form
$$\varphi_k=\sum_{l=0}^sf_l\,\pi(\sigma^{l}(\varphi))$$
with $f_0,\dots,f_s\in\A$ for some $s\in\N$.
\end{remark}

\begin{example}
Consider the idempotent difference ring $\dfield{\Q(x)[y]}{\sigma}$ with $\sigma(x)=x+1$ and $\sigma(y)=-y$ and the idempotent elements $e_0=\frac{1-y}2$ and $e_1=\frac{1-y}2$. Let $\vect{a}=(x,x,1,y),$ then the shift projection matrix $M_{\sigma,\pi}(\vect{a})$ yields
$$
\left(
\begin{array}{cccccccc}
 x & x & 1 & -1 & 0 & 0 & 0 & 0 \\
 0 & 1+x & 1+x & 1 & 1 & 0 & 0 & 0 \\
 0 & 0 & 2+x & 2+x & 1 & -1 & 0 & 0 \\
 0 & 0 & 0 & 3+x & 3+x & 1 & 1 & 0 \\
 0 & 0 & 0 & 0 & 4+x & 4+x & 1 & -1 \\
\end{array}
\right),
$$
which has full rank. If $g=e_0 g_0+e_1 g_1\in\E$ is a solution of 
$$xg+x\sigma(g)+\sigma^2(g)+y\sigma^3(g)=0$$ then we find for $g_0$ and $g_1$:
\begin{align*}
 x (1+x) (5+2 x)g_0+(7+7 x-3 x^2-2 x^3)\sigma^2(g_0)&\\
 +4 (1+x)(\sigma^2)^2(g_0)+(1+2 x)(\sigma^2)^3(g_0)&=0,\\
  x (1+x)g_1+(3+x-x^2)\sigma^2(g_1)-2(\sigma^2)^2(g_1)+(\sigma^2)^3(g_1)&=0.
\end{align*}
\end{example}
\noindent  Note that even in the degenerated case it might be possible to use the method stated in the proof of Theorem~\ref{Subops} to construct non-zero linear difference equations for some of the components.
\begin{example}\label{Ex:Degenerate}
Again we consider the idempotent difference ring $\dfield{\Q(x)[y]}{\sigma}$ with $\sigma(x)=x+1$ and $\sigma(y)=-y$ and the idempotent elements $e_0=\frac{1-y}2$ and $e_1=\frac{1-y}2$. Let $\vect{a}=(y-1,x(y+1),y-1,x(y+1))$, then the shift projection matrix $M_{\sigma,\pi}(\vect{a})$ yields
$$
\left(
\begin{array}{cccccccc}
 -2 & 0 & -2 & 0 & 0 & 0 & 0 & 0 \\
 0 & 0 & 2 (1+x) & 0 & 2 (1+x) & 0 & 0 & 0 \\
 0 & 0 & -2 & 0 & -2 & 0 & 0 & 0 \\
 0 & 0 & 0 & 0 & 2 (3+x) & 0 & 2 (3+x) & 0 \\
 0 & 0 & 0 & 0 & -2 & 0 & -2 & 0 \\
\end{array}
\right),
$$
which clearly doesn't have full rank. Still if $g=e_0 g_0+e_1 g_1\in\E$ is a solution of 
$$(y-1)g+x(y+1)\sigma(g)+(y-1)\sigma^2(g)+x(y+1)\sigma^3(g)=0$$ 
then the first component $g_0$ satisfies
$g_0+\sigma^2(g_0)=0$ but we do not find a non-trivial linear difference equation for $g_1$.
\end{example}

 With this notion 
the following corollary is immediate.
\begin{corollary}\label{Cor:GetParaRecsForCopies}
Let $\dfield{\E}{\sigma}$ and $\vect{a}\in\E^{m+1}$ be as stated in Theorem~\ref{Subops}.
Consider the PLDE~\eqref{Equ:PLDE}
with $f_i\in \E$ and $c_i\in\K$, which is satisfied by $g=\sum_{s=0}^{\lambda-1}e_s\,g_s\in\E$ and let $k\in \N$ with $0 \leq k<\lambda$. Then there exist 
$b_{k,i}\in e_k\E$, not all zero, and $f_{k,j}\in e_k\E$ such that
\begin{align}\label{Equ:ParaRecsForCopy}
\sum_{i=0}^mb_{k,i}(\sigma^\lambda)^i(g_k)=c_1\,f_{k,1}+\dots+c_d\,f_{k,d}.
\end{align}
In particular, if $\dfield{\E}{\sigma}$ is computable, the $a_{k,i}$ and $f_{k,j}$ are computable.
\end{corollary}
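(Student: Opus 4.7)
The approach is straightforward: apply Theorem~\ref{Subops} with the right-hand side $\varphi := c_1 f_1 + \dots + c_d f_d$, and then split the resulting $\varphi_k$ linearly in the constants $c_j \in \K$. Since the PLDE~\eqref{Equ:PLDE} coincides with the linear difference equation~\eqref{Equ:lde} of Theorem~\ref{Subops} under this identification, and $\vect{a}$ is non-degenerate with $a_m \neq 0$, I obtain directly, for each $k$ with $0 \leq k < \lambda$, coefficients $b_{k,0}, \dots, b_{k,m} \in e_k \E$, not all zero, and a right-hand side $\varphi_k \in e_k \E$ such that $\sum_{i=0}^m b_{k,i} (\sigma^\lambda)^i(g_k) = \varphi_k$. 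The crucial observation is that the $b_{k,i}$ come from solving the linear system~\eqref{Equ:LinearSystem} driven by $M_{\sigma,\pi}(\vect{a})$ alone, and therefore depend only on $\vect{a}$, not on $\varphi$; the same $b_{k,i}$ will thus serve uniformly for all choices of the parameters $c_j$.

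Next, to exhibit the $f_{k,j}$ explicitly, I would invoke Remark~\ref{Remark:PropertyOfRHS}, which tells us that the right-hand side has the form $\varphi_k = \sum_{l=0}^s h_l \, \pi(\sigma^l(\varphi))$ for some coefficients $h_l$ extracted from $\vect{a}$. Plugging in $\varphi = \sum_{j=1}^d c_j f_j$ and using that each $c_j \in \K$ is fixed by $\sigma$, so that $\sigma^l$ is $\K$-linear, together with the $\K$-linearity of $\pi$ guaranteed by Lemma~\ref{Lemma:pi_props}, I can pull every $c_j$ out of both operators to obtain
$$\varphi_k \;=\; \sum_{j=1}^d c_j \Bigl( \sum_{l=0}^s h_l \, \pi(\sigma^l(f_j)) \Bigr) \;=\; c_1 f_{k,1} + \dots + c_d f_{k,d},$$
where $f_{k,j} := \sum_{l=0}^s h_l \, \pi(\sigma^l(f_j)) \in e_k \E$. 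This is exactly~\eqref{Equ:ParaRecsForCopy}, so combining with the $b_{k,i}$ from the first step yields the desired identity.

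For the computability claim, in a computable $\dfield{\E}{\sigma}$ one can construct $M_{\sigma,\pi}(\vect{a})$ and perform the row-reduction from the proof of Theorem~\ref{Subops} to produce the $b_{k,i}$ and the explicit $h_l$; then each $f_{k,j}$ is obtained by a finite $\E$-linear combination of the effectively evaluable quantities $\pi(\sigma^l(f_j))$. I do not foresee any genuine obstacle: the entire argument is an immediate linearization of Theorem~\ref{Subops}, and the only point requiring a little care is to record that the $b_{k,i}$ are independent of $\varphi$, so that they can be produced once and reused across all $c_j$-contributions in the parameterized setting.
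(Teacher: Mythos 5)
Your proof is correct and follows exactly the route the paper intends: the paper declares the corollary ``immediate'' from Theorem~\ref{Subops}, and your linearization of $\varphi_k$ in the $c_j$ via Remark~\ref{Remark:PropertyOfRHS}, together with the observation that the $b_{k,i}$ depend only on $\vect{a}$, is precisely the omitted argument. The only (harmless) imprecision, inherited from the paper itself, is that your explicit formula $f_{k,j}=\sum_l h_l\,\pi(\sigma^l(f_j))$ lands in $e_0\E$; for general $k$ one uses the projection onto the $k$-th component (or transports by $\sigma^k$) to place $b_{k,i}$ and $f_{k,j}$ in $e_k\E$.
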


We are now ready to obtain a general strategy to solve PLDEs under the assumption that one can solve PLDEs in $\dfield{e_0\,\E}{\sigma^{\lambda}}$. Note that the task to compute for $f_0,\dots,f_d\in e_k\,\E$ a basis of
\begin{equation}\label{Equ:AnnProblem}
\{(c_1,\dots,c_d)\in\K^d\mid c_1\,f_1+\dots+c_d\,f_d=0\}
\end{equation}
is a special case by setting $g=0$ in~\eqref{Equ:PLDE}.

\begin{theorem}\label{Thm:IdempotentSolver}
	Let $\dfield{\E}{\sigma}$ be an idempotent difference ring with the idempotent elements $e_0,\dots,e_{\lambda-1}$ and constant field $\K$, and let $\vect{a}\in\E^{m+1}$ and $\vect{f}\in\E^d$. If $\const{e_0\,\E}{\sigma^{\lambda}}=e_0\,\K$ and $\vect{a}$ is non-degenerated, $V(\vect{a},\vect{f},\E)$ has a finite basis. If $\dfield{\E}{\sigma}$ is computable and PLDEs in $\dfield{e_0\E}{\sigma^{\lambda}}$ can be computed, a basis of $V(\vect{a},\vect{f},\E)$ can be computed.
\end{theorem}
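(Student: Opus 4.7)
The strategy is to reduce solvability in $\dfield{\E}{\sigma}$ to per-component solvability in $\dfield{e_k\E}{\sigma^\lambda}$ for $0\le k<\lambda$, each of which is handled through the difference isomorphism with $\dfield{e_0\E}{\sigma^\lambda}$ observed after Definition~\ref{Def:IdemPotentDR}.

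First, I would apply Corollary~\ref{Cor:GetParaRecsForCopies}: every $(c_1,\dots,c_d,g)\in V$ with $g=\sum_{s=0}^{\lambda-1}e_s\,g_s$ forces each component $g_k$ to satisfy the derived PLDE~\eqref{Equ:ParaRecsForCopy} in $\dfield{e_k\E}{\sigma^\lambda}$ with the same constants $c_i$. Since $\sigma^k$ is a difference ring isomorphism $\dfield{e_0\E}{\sigma^\lambda}\to\dfield{e_k\E}{\sigma^\lambda}$ that fixes $\K$ pointwise, it sends $e_0\K$ onto $e_k\K$; together with the hypothesis $\const{e_0\E}{\sigma^\lambda}=e_0\K$ this yields $\const{e_k\E}{\sigma^\lambda}=e_k\K$ for every $k$. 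Transporting the derived PLDE back along $\sigma^{-k}$ gives a PLDE in $\dfield{e_0\E}{\sigma^\lambda}$, whose $\K$-basis of solutions is computable by hypothesis; pushing this basis forward along $\sigma^k$ produces a finite $\K$-basis of the full solution space $\tilde V_k\subseteq\K^d\times e_k\E$ of~\eqref{Equ:ParaRecsForCopy}.

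Next, I would assemble these per-component solutions. Using~\eqref{Equ:decomposition} to identify $g\in\E$ with the tuple $(g_0,\dots,g_{\lambda-1})$, set
\[
U:=\bigl\{(c,g_0,\dots,g_{\lambda-1})\in\K^d\times\prod_{k=0}^{\lambda-1}e_k\E : (c,g_k)\in\tilde V_k \text{ for all } k\bigr\}.
\]
Expressing each $(c,g_k)$ as an unknown $\K$-linear combination of the basis of $\tilde V_k$, the requirement that the $c$-parts agree across all $k$ becomes a homogeneous linear system over $\K$ in finitely many unknowns, so $U$ is a finite-dimensional $\K$-vector space and a basis of it is computable. By Corollary~\ref{Cor:GetParaRecsForCopies} we have $V\subseteq U$, which already shows that $V$ is finite-dimensional.

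Finally, since the per-component equations of Corollary~\ref{Cor:GetParaRecsForCopies} are only necessary, I would extract $V$ from $U$ by a verification step. For a basis $\{(c^{(l)},g^{(l)})\}_l$ of $U$, set $r^{(l)}:=a_0\,g^{(l)}+\dots+a_m\,\sigma^m(g^{(l)})-c_1^{(l)}f_1-\dots-c_d^{(l)}f_d\in\E$; then $V$ is the $\K$-span of those $(\alpha_l)$ with $\sum_l\alpha_l\,r^{(l)}=0$. Decomposing $r^{(l)}=\sum_k e_k\,r_k^{(l)}$ reduces this to finding, for each $k$ separately, the $\K$-linear relations among $r_k^{(l)}\in e_k\E$; via $\sigma^{-k}$ each such family transports to an annihilator problem in $e_0\E$, which is the special case $g=0$ of a PLDE and is therefore computable by hypothesis. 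Intersecting the $\lambda$ resulting relation spaces yields a computable basis of $V$. The main obstacle is this last filtering step: one must carefully turn a single $\E$-valued PLDE residual into $\lambda$ component-wise $\K$-annihilator problems while ensuring that the reduction stays inside the assumed solver for $\dfield{e_0\E}{\sigma^\lambda}$ and does not reintroduce the original problem by recursion.
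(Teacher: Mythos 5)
Your proposal is correct and follows essentially the same route as the paper: component-wise reduction via Corollary~\ref{Cor:GetParaRecsForCopies}, solving each component in $\dfield{e_0\E}{\sigma^{\lambda}}$ through the isomorphism $\sigma^k$, gluing by matching the constant vectors over $\K$ (the paper's space $W$), and a final filtering of the residuals against the original PLDE via the component-wise annihilator problem~\eqref{Equ:AnnProblem} (the paper's space $W'$). The only differences are cosmetic: the paper additionally spells out the degenerate cases where some component space or the glued space is trivial.
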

\begin{proof}
	We look for a basis of $V=V(\vect{a},\vect{f},\E)$ over $\K$ for a non-degenerated $\vect{a}\in\E^{m+1}$ and $\vect{f}\in\E^d$.
	By Corollary~\ref{Cor:GetParaRecsForCopies} there exist $b_{k,i}\in e_k\,\E$, not all zero, and $f_{k,j}\in e_k\,\E$ with~\eqref{Equ:ParaRecsForCopy}. Since $e_k\,\E$ for $0\leq k<\lambda$ are integral domains, we can take a finite basis $\{(e_kc^{(k)}_{j,1},\dots,e_kc^{(k)}_{j,d},e_k\gamma^{(k)}_j)\}_{1\leq j\leq \delta_k}\subseteq(e_k\,\K)^d\times(e_k\,\E)$ with $c^{(k)}_{j,l}\in\K$ of $V_k=V((b_{k,0},\dots,b_{k,m}),(f_{k,1},\dots,f_{k,d}),e_k\,\E)$ over $e_k\,\K$. If $\delta_k=0$ for some $0\leq k<\lambda$ it follows that $V=\{\vect{0}\}$ and we get the empty basis. 
	Otherwise, we can take a basis of 
	\begin{multline*}
	W=\{(c_1,\dots,c_d,e_0 g_0+\dots+e_{\lambda-1}g_{\lambda-1})\in\K^d\times\E\mid\\ 
	(e_kc_1,\dots,e_kc_d,e_kg_k)\in V_k\text{ for $0\leq k<\lambda$}\}.
	\end{multline*}
	as follows. We define $C_k=(c_{j,l})_{1\leq j\leq \delta_k,1\leq l\leq d}$ for $0\leq k<\lambda$ and take a $\K$-basis, say 
	\begin{multline*}
	\{(d_{l,0,1},\dots,d_{l,0,\delta_1},\dots,d_{l,\lambda-1,1},
	\dots,d_{l,\lambda-1,\delta_{\lambda-1}})\}_{1\leq l\leq r},
	\end{multline*} 
	of the $\K$-vector space
	\begin{multline*}
	\{(d_{0,1},\dots,d_{0,\delta_1},\dots,d_{\lambda-1,1},\dots,d_{\lambda-1,\delta_{\lambda-1}})\in\K^{\delta_0+\dots+\delta_{\lambda-1}}\mid\\ (d_{0,1},\dots,d_{0,\delta_{0}})C_0=\dots=(d_{\lambda-1,1},\dots,d_{\lambda-1,\delta_{\lambda-1}})C_{\lambda-1}\}.
	\end{multline*}
	$\bullet$ If $r>0$, we proceed as follows. We define 
	for $1\leq l\leq r$ the elements 
	$$g_l=g^{(0)}_l+\dots+g^{(\lambda-1)}_l\in\E$$
	with
	$g^{(k)}_l=d_{l,k,1}\,e_{k}\,\gamma^{(k)}_1+\dots+d_{l,k,\delta_k}\,e_{k}\,\gamma^{(k)}_{\delta_k}$ where $0\leq k<\lambda$, and define for $1\leq l\leq r$ the constants
	$$(c_{l,1},\dots,c_{l,d})=(d_{l,0,1},\dots,d_{l,0,\delta_{1}})C_1\in\K^d.$$
	Then $B=\{(c_{l,1},\dots,c_{l,d},g_l)\}_{1\leq l\leq r}$ forms a bases of $W$. 
	Now we plug in the found basis elements into~\eqref{Equ:PLDE} and obtain linear constraints. Fulfilling them by combining the basis elements accordingly will lead finally to a basis of the solution space $V$.
	For this final step, take $C=(c_{l,i})_{1\leq l\leq r,1\leq i\leq d}$ with $c_{l,i}\in\K$ and $\vect{g}=(g_1,\dots,g_r)\in\E^r$, and define
	$$\vect{f}':=C\vect{f}^t-(a_m\sigma^m(\vect{g})+\dots+a_0\vect{g})\in\E^r;$$
	here applying $\sigma$ to a vector means to apply $\sigma$ to each component. Note that nonzero elements in $\vect{f'}$ reflect the disagreement of the so far found basis $B$ to be also a basis of $V$.  
	To complete the construction, we compute for the vector space
	\begin{equation}\label{Equ:CriticalBases}
	W'=\{(\kappa_1,\dots,\kappa_r)\in\K^r\mid (\kappa_1,\dots,\kappa_r)\vect{f}'\}
	\end{equation}
	the basis $\{(\kappa_{i,1},\dots,\kappa_{i,r})_{1\leq i\leq s}\subseteq\K^s$; here one collects the components of $\vect{f}'$ w.r.t.\ the $e_k$ for $0\leq k<\lambda$ (which is justified since $e_0,\dots,e_{\lambda-1}$ are linearly independent), derives the bases in the integral domains $e_k\,\E$ for each $0\leq k<\lambda$ and computes the intersection of the corresponding vector spaces to get a basis of $W'$. 
	If $s=0$, $V=\{\vect{0}\}$ and we get the empty basis of $V$. Otherwise,
	take $D=(\kappa_{i,j})_{1\leq i\leq s, 1\leq j\leq r}$ and define the entries of the matrix $(c'_{i,j})_{1\leq i\leq s,1\leq j\leq d}:=D\,C$ and the entries of the vector $(g'_1,\dots,g'_s):=D\,(g_1,\dots,g_r)\in\E^s$. By construction  $\{(c'_{i,1},\dots,c'_{i,d},g'_i)\}_{1\leq i\leq s}\subseteq\K^d\times\E$ is a basis of $V$.\\
	$\bullet$ If $r=0$, it follows that $V\subset\{0\}^d\times\E$, i.e., we only have to search for homogeneous solutions of~\eqref{Equ:PLDE}. Using the above construction we get a basis of the form $\{(0,g'_i)\}_{1\leq i\leq s}\cup\{(1,0)\}$ of $V(\vect{a},(0),\E)$. This gives the basis $\{(0,\dots,0,g'_i)\}_{1\leq i\leq s}\subseteq\{0\}^d\times\E$ of $V$.\\
We observe that the construction above can be carried out explicitly if the algorithmic assumptions hold: First, we can compute the bases of $V_i$; more precisely, we move the problem with the isomorphism $\sigma^{\lambda-i}$ to the zero component, solve it there and move it back with $\sigma^{i}$. Further, we can solve the various linear algebra problems in $\K$. Finally, $e_k\,\E$ ($0\leq k<\lambda$) are integral domains and we can compute a basis of~\eqref{Equ:CriticalBases} (by assumption a basis of~\eqref{Equ:AnnProblem} can be computed).
\end{proof}

\section{Solvers for $(R)\Pi\Sigma$-extensions}\label{Sec:GeneralSolvers}

We will now apply Theorem~\ref{Thm:IdempotentSolver} to a rather general class of difference rings built by basic
\rpisiE-ring extensions~\cite{DR1,DR3} that are defined over \pisiE-field extensions~\cite{Karr:81}. Before we can state  Theorem~\ref{Thm:QuotientRPSRingSolver} below, we will present more details on the underlying construction.

\begin{definition}\label{Def:APSExt}
	A difference ring $(\E,\sigma)$ is called an \emph{\rpisiE-ring extension} of a difference ring
	$\dfield{\A}{\sigma}$ if 
	$\A=\A_0\leq\A_1\leq\dots\leq\A_e=\E$
	is a tower of ring extensions with $\const{\E}{\sigma}=\const{\A}{\sigma}$ where for all $1\leq i\leq
	e$ one of the following holds:
	\begin{itemize}
		\item $\A_i=\A_{i-1}[t_i]$ is a ring extension subject to the relation $t_i^{\nu}=1$ for some $\nu>1$ where $\frac{\sigma(t_i)}{t_i}\in(\A_{i-1})^*$ is a primitive  $\nu$th root of unity ($t_i$ is called an \emph{\rE-monomial}, and $\nu$ is called the \emph{order of the \rE-monomial});
		\item $\A_i=\A_{i-1}[t_i,t_i^{-1}]$ is a Laurent polynomial ring extension with $\frac{\sigma(t_i)}{t_i}\in(\A_{i-1})^*$ ($t_i$ is called a \emph{\piE-monomial});
		\item $\A_i=\A_{i-1}[t_i]$ is a polynomial ring extension with $\sigma(t_i)-t_i\in\A_{i-1}$ ($t_i$ is called an \emph{\sigmaE-monomial}).
	\end{itemize}
	Depending on the occurrences of the \rpisiE-monomials such an extension is also called a \emph{\rE-/\piE-/\sigmaE-/$R\Pi$-/$R\Sigma$-/\pisiE-ring extension}.
\end{definition}

For convenience we use $\A\lr{t}$ for three different meanings: it is the ring $\A[t]$ subject to the relation $t^{\nu}=1$ if $t$ is an \rE-monomial of order $\nu$, it is the polynomial ring $\A[t]$ if $t$ is a \sigmaE-monomial, or it is the Laurent polynomial ring $\A[t,t^{-1}]$ if $t$ is a \piE-monomial.
We will restrict \rpisiE-ring extensions further to basic \rpisiE-ring extensions~\cite{DR3}.

\begin{definition}\label{defn:simpleNestedAExtension}
	Let $\dfield{\E}{\sigma}$ be a \rpisiE-ring extension of $\dfield{\A}{\sigma}$ with $\E=\A\langle t_{1}\rangle\dots\langle t_{e}\rangle$.
	We define the \emph{product group} by
	\begin{multline*}
	[\A^*]_{\A}^{\E} := \{ f\,t_1^{m_1}\dots t_e^{m_e}|\, f\in\A^* \text{ and } m_{i}\in\Z\\
	\text{ where $m_i=0$ if $t_i$ is an $R\Sigma$-monomial}\}.
\end{multline*}
	Then $\dfield{\E}{\sigma}$ is called a \emph{basic \rpisiE-ring extension} of $\dfield{\A}{\sigma}$ if for all \piE-monomials $t_i$ we have $\tfrac{\sigma(t_{i})}{t_{i}}\in [\A^*]_{\A}^{\A\langle t_{1}\rangle\dots\langle t_{i-1}\rangle}$ and for all \rE-mono\-mials $t_i$ we have  $\frac{\sigma(t_{i})}{t_{i}}\in\const{\A}{\sigma}^*$.
\end{definition}


In the following we seek for algorithms that solve PLDEs in a basic \rpisiE-ring extension  $\dfield{\E}{\sigma}$ of a difference field $\dfield{\F}{\sigma}$ with constant field $\K$. By Lemma~2.22 and Proposition~2.23 in~\cite{DR3} it turns out that one can collect several \rE-monomials to one specific \rE-monomial. Thus we assume from now on that $\dfield{\E}{\sigma}$ has the form
\begin{equation}\label{Equ:SingleRPS}
\E=\F[y]\lr{t_1}\dots\lr{t_e}
\end{equation}
where $y$ is an \rE-monomial of order $\lambda$ with $\alpha:=\frac{\sigma(y)}{y}\in\K^*$ and where the $t_i$ with $1\leq i\leq e$ are \pisiE-monomials with $\sigma(t_i)=\alpha_i\,t_i+\beta_i$ (note that either $\alpha_i=1$ or $\alpha_i\in[\F^*]_{\F}^{\E}$ with $\beta_i=0$). 
Now take 
$$\tilde{e}_s=\tilde{e}_s(y):=\tprod_{\substack{j=0\\j\neq \lambda-1-s}}^{\lambda-1}(y-\alpha^{j})$$
for $0\leq s<\lambda$. Since $\alpha$ is a $\lambda$th primitive root of unity, we have that $\tilde{e}_s(\alpha^{\lambda-1-s})\neq0$. Thus we can define 
\begin{equation}\label{Equ:SpecificEi}
e_s=e_s(y):=\frac{\tilde{e}_{s}(y)}{\tilde{e}_{s}(\alpha^{\lambda-1-s})}
\end{equation}
for $0\leq s<\lambda$ which fulfill precisely the properties enumerated in Definition~\ref{Def:IdemPotentDR}.
In particular, by~\cite[Thm.~4.3]{DR3} (compare also~\cite[Corollary~1.16]{Singer:97} and~\cite{Singer:08}) it follows that $\dfield{\E}{\sigma}$ is an idempotent difference ring of order $\lambda$. In particular, it is constant-stable provided that the ground field $\dfield{\F}{\sigma}$ is constant-stable.

\begin{definition}\label{Def:ConstantStable}
	A difference ring (resp.\ field) $\dfield{\A}{\sigma}$ is called \emph{constant-stable} if $\const{\A}{\sigma^k}=\const{\A}{\sigma}$ for all $k\in\N\setminus\{0\}$.
\end{definition}

\begin{theorem}[{\cite[Thm.~4.3]{DR3}}]\label{Thm:Interlacing}
	Let $\dfield{\E}{\sigma}$ be a basic \rpisiE-ring extension of a difference field $\dfield{\F}{\sigma}$ with~\eqref{Equ:SingleRPS} where $y$ is an \rE-monomial of order $\lambda$ with $\alpha=\frac{\sigma(y)}{y}$. Let $e_0,\dots,e_{\lambda-1}$ be the idempotent, pairwise orthogonal elements defined in~\eqref{Equ:SpecificEi} (that sum up to one). Then:
	\begin{enumerate}
		\item We get the direct sum~\eqref{Equ:decomposition} of the rings $e_s\,\E$ with the multiplicative identities $e_s$.
		\item We have that $e_s\,\E=e_s\,\tilde{\E}$ with the integral domain
		\begin{equation}\label{Equ:TildeE}
		\tilde{\E}:=\F\lr{t_1}\dots\lr{t_e}.
		\end{equation}
		\item For $0\leq s<\lambda$, $\dfield{e_s\,\tilde{\E}}{\sigma^\lambda}$ is a basic \pisiE-ring extension of $\dfield{e_s\,\F}{\sigma^\lambda}$. 
		\item $\sigma$ is a difference ring isomorphism between $\dfield{e_s\,\tilde{\E}}{\sigma^\lambda}$ and $\dfield{e_{s+1\mmod \lambda}\tilde{\E}}{\sigma^\lambda}$.
		\item Further, if $\dfield{\F}{\sigma}$ is constant-stable, $\const{e_s\,\E}{\sigma^\lambda}=e_s\,\const{\F}{\sigma}$.
	\end{enumerate}
\end{theorem}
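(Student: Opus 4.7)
The plan is to establish the five claims sequentially, leveraging the factorization $y^\lambda - 1 = \prod_{j=0}^{\lambda-1}(y - \alpha^j)$ into distinct linear factors, which holds since $\alpha\in\K$ is a primitive $\lambda$th root of unity. I would begin by verifying the idempotent structure behind Claim~(1). The key identity is $y\cdot\tilde{e}_s(y) = \alpha^{\lambda-1-s}\,\tilde{e}_s(y)$ modulo $y^\lambda-1$, from which $e_s^2=e_s$ follows by the normalization $e_s(\alpha^{\lambda-1-s})=1$; orthogonality $e_s e_t = 0$ for $s\neq t$ follows because $\tilde{e}_s\tilde{e}_t$ contains every factor $y-\alpha^j$ as a divisor, hence vanishes in $\E$; and $\sum_s e_s=1$ is Lagrange interpolation of the constant $1$ at the $\lambda$ distinct nodes $\alpha^j$. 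For the shift relation $\sigma(e_s)=e_{s+1\mmod\lambda}$ I would apply $\sigma(y)=\alpha y$ together with the polynomial identity $\tilde{e}_s(\alpha y) = \alpha^{\lambda-1}\tilde{e}_{s+1\mmod\lambda}(y)$, obtained by re-indexing the product via $\alpha y - \alpha^j = \alpha(y-\alpha^{j-1})$. A complete orthogonal idempotent system then induces the direct sum decomposition claimed in~(1).

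For Claim~(2), I would use $y\cdot e_s = \alpha^{\lambda-1-s}\,e_s$ to collapse every power of $y$ to a scalar, so that each element of $e_s\E$ lies in $e_s\tilde{\E}$; the reverse inclusion is trivial. That each $e_s\tilde{\E}$ is an integral domain follows from the $\K$-algebra isomorphism $\E\cong\tilde{\E}\otimes_\K\K[y]/(y^\lambda-1)\cong\tilde{\E}^\lambda$ (Chinese Remainder Theorem applied to the factorization of $y^\lambda-1$), under which each direct summand is identified with the integral domain $\tilde{\E}$.

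For Claim~(3), I would check monomial by monomial. Since $\sigma^\lambda(e_s)=e_s$, the automorphism $\sigma^\lambda$ restricts to each $e_s\tilde{\E}$. If $t_i$ is a $\Pi$-monomial with $\sigma(t_i)/t_i=\alpha_i$, then $\sigma^\lambda(t_i)/t_i = \prod_{j=0}^{\lambda-1}\sigma^j(\alpha_i)$ lies in the appropriate subring; if $t_i$ is a $\Sigma$-monomial with $\sigma(t_i)-t_i=\beta_i$, then $\sigma^\lambda(t_i)-t_i = \sum_{j=0}^{\lambda-1}\sigma^j(\beta_i)$ does too. The product group condition in Definition~\ref{defn:simpleNestedAExtension} then transfers after multiplication by $e_s$. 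Claim~(4) is immediate from the shift property: $\sigma$ sends $e_s\tilde{\E}$ bijectively to $e_{s+1\mmod\lambda}\tilde{\E}$ and commutes with $\sigma^\lambda$. For Claim~(5), constant-stability of $\F$ gives $\const{\F}{\sigma^\lambda}=\K$, hence $\const{e_s\F}{\sigma^\lambda}=e_s\K$; by Claim~(3) the extension $\dfield{e_s\tilde{\E}}{\sigma^\lambda}$ is basic $\Pi\Sigma$ over this base, and the standard non-growth-of-constants property of $\Pi\Sigma$-extensions delivers the claim.

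The main obstacle is Claim~(3): one has to verify that after passing to $\sigma^\lambda$ and multiplying by the idempotent $e_s$, each monomial extension retains its correct transcendence type, \ie the $\Pi$-monomials remain transcendental units whose multiplicative character lies in the appropriate product group, and the $\Sigma$-monomials remain transcendental with a telescoper-free inhomogeneous part. This can, in principle, fail if collapsing by $e_s$ renders a previously non-trivial character trivial or forces a telescoping relation; ruling this out uses the \emph{basic} $R\Pi\Sigma$-extension assumption in an essential way, and is where the structural results of~\cite{DR3} do the heavy lifting.
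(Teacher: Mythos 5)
The paper does not prove this theorem: it is imported verbatim from \cite[Thm.~4.3]{DR3} (with pointers to \cite[Cor.~1.16]{Singer:97} and \cite{Singer:08}), so there is no in-paper argument to compare yours against. Judged on its own terms, your reconstruction of the elementary parts is sound. The identities $y\,\tilde{e}_s(y)\equiv\alpha^{\lambda-1-s}\tilde{e}_s(y)$ and $\tilde{e}_s(\alpha y)=\alpha^{\lambda-1}\tilde{e}_{s+1\mmod\lambda}(y)$ do give idempotency, orthogonality, completeness and the cyclic shift, hence claim (1); the collapse $y\,e_s=\alpha^{\lambda-1-s}e_s$ together with $\E\cong\tilde{\E}\otimes_{\K}\K[y]/(y^\lambda-1)\cong\tilde{\E}^{\lambda}$ gives claim (2); and claim (4) is indeed immediate from $\sigma(e_s)=e_{s+1\mmod\lambda}$.

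The one genuine gap is exactly the one you flag yourself. For claim (3), checking that $\sigma^\lambda(t_i)$ has the right shape (a unit times $t_i$, or $t_i$ plus something from below) and that the product-group condition survives multiplication by $e_s$ is the easy half; the defining condition of a \pisiE-ring extension in Definition~\ref{Def:APSExt} is $\const{e_s\tilde{\E}}{\sigma^\lambda}=\const{e_s\F}{\sigma^\lambda}$, and that is not a formal consequence of the monomials having the right shape -- it is the actual content of \cite[Thm.~4.3]{DR3}. Claim (5) inherits this dependence: once (3) holds, constant-stability of $\dfield{\F}{\sigma}$ gives $\const{e_s\F}{\sigma^\lambda}=e_s\K$ and (5) follows, but not before. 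Deferring this to the structural results of \cite{DR3} is defensible (it mirrors what the paper does for the entire theorem), but it means your proposal is a correct unpacking of the routine parts plus a citation at the crux, not a self-contained proof. If you wanted to close the gap, the place to look is the interlacing argument of \cite[Sec.~4]{DR3}: one transports each component to the difference ring $\dfield{\tilde{\E}}{\sigma_s}$ via $f\mapsto e_s f$ as in~\eqref{Equ:Sigmas} and then invokes the characterization of \pisiE-extensions over the base field there.
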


In this particular setting, the used constructions in Section~\ref{Sec:IPDR} and in Theorem~\ref{Thm:Interlacing} can be made more precise as follows.
For $f\in\E$, the projection of the first component can be computed by 
$$\pi(f):=\ssumB{i=0}{\lambda-1} e_i(y)f\Big|_{y\to\alpha^{\lambda-1}}.$$
Furthermore, define for $n\in\N$ and $f\in\F$ the $\sigma$-factorial
$$\dfact{f}{n}{\sigma}=\sprodB{i=0}{n-1}\sigma^i(f).$$
Then with $\sigma(t_i)=\alpha\,t_i+\beta_i$ (recall that $\alpha_i=1$ or $\beta_i=0$) we get $\sigma^{\lambda}(t_i)=\tilde{\alpha}_i\,t_i+\tilde{\beta}_i$ with
$\tilde{\alpha}_i=\dfact{\alpha}{i}{\sigma}$ and $\tilde{\beta}_i=\sum_{l=0}^{i-1}\sigma^l(\beta)$.
In particular, we can define for $0\leq s<\lambda$ the ring automorphism $\sigma_s:\tilde{\E}\to\tilde{E}$ with $\sigma_s(f)=\sigma^{\lambda}(f)$ for $f\in\F$ and 
\begin{equation}\label{Equ:Sigmas}
\sigma_s(t_i)=\tilde{\alpha}_i\,t_i+(\tilde{\beta}_i|_{y\to\alpha^{\lambda-1-s}})
\end{equation}
for all $1\leq i\leq e$. Then $\dfield{\tilde{\E}}{\sigma_s}$ and $\dfield{e_s\tilde{\E}}{\sigma^{\lambda}}$ are isomorphic with the difference ring isomorphism $\tau:\tilde{\E}\to e_s\tilde{\E}$ with $\tau(f)=e_s\,f$ for $f\in\tilde{E}$; for further details we refer to~\cite[page~639]{DR3}. In the following we prefer to work with $\dfield{\tilde{\E}}{\sigma_s}$ instead of $\dfield{e_s\E}{\sigma^{\lambda}}$. Note that this representation is also more convenient for implementations.

As observed in Theorem~\ref{Thm:Interlacing} we obtain the \pisiE-ring extension $\dfield{\tilde{E}}{\sigma_s}$ of $\dfield{\F}{\sigma^{\lambda}}$ where $\tilde{\E}$ is an integral domain. Thus we can take the quotient field $Q(\tilde{E})=\F(t_1)\dots(t_e)$ and by naturally extending $\sigma_s:\tilde{\E}\to\tilde{\E}$ to $\sigma'_s:Q(\tilde{\E})\to Q(\tilde{\E})$ with $\sigma'_s(\frac{a}{b})=\frac{\sigma_s(a)}{\sigma_s(b)}$ we get a difference field $\dfield{Q(\tilde{\E})}{\sigma'_s}$; from now on we do not distinguish anymore between $\sigma'_s$ and $\sigma_s$. 

Finally, we take the ring of fractions 
$Q(\E)=\{\tfrac{a}{b}\mid a\in\E, b\in\E^*\}$
which can be written in terms of the idempotent representation
\begin{equation}\label{Equ:QuotRep}
Q(\E)=e_0\,Q(\tilde{\E})\oplus\dots\oplus e_{n-1}\,Q(\tilde{\E}).
\end{equation}
In particular, we can extend the automorphism $\sigma:\E\to\E$ to $\sigma:Q(\E)\to Q(\E)$ by mapping $f=e_0\,f_0+\dots+e_{\lambda-1}\,f_{\lambda-1}$ with $f_i\in\tilde{\E}$ to $\sigma(f)=e_0\,\sigma(f_{\lambda-1})+ e_1\sigma(f_0)+\dots+e_{\lambda-1}f_{\lambda-2}$; compare~\cite[Sec.~1.3 ]{Singer:99} and~\cite[Cor.~6.9]{Singer:08}.

Summarizing, also $\dfield{Q(\E)}{\sigma}$ is an idempotent difference ring of order $\lambda$ as introduced in Definition~\ref{Def:IdemPotentDR} and it seems naturally to apply Theorem~\ref{Thm:IdempotentSolver} to this more general situation.
Here we note (compare also~\cite[Prop.~66]{DR3}) that each component $\dfield{\tilde{\E}}{\sigma_s}$ for $0\leq s<\lambda$ is actually a special case of a  \pisiE-field-extension~\cite{Karr:81,Karr:85}.

\begin{definition}\label{Def:PSFieldExt}
	A difference field $(\F,\sigma)$ is called a \emph{\pisiE-field extension} of a difference field
	$\dfield{\HH}{\sigma}$ if 
	$\HH=\HH_0\leq\HH_1\leq\dots\leq\HH_e=\F$
	is a tower of field extensions with $\const{\F}{\sigma}=\const{\HH}{\sigma}$ where for all $1\leq i\leq
	e$ one of the following holds:
	\begin{itemize}
		\item $\HH_i=\HH_{i-1}(t_i)$ is a rational function field extension with $\frac{\sigma(t_i)}{t_i}\in(\HH_{i-1})^*$ ($t_i$ is called a \emph{\piE-field monomial});
		\item $\HH_i=\HH_{i-1}(t_i)$ is a rational function extension with $\sigma(t_i)-t_i\in\HH_{i-1}$ ($t_i$ is called a \emph{\sigmaE-field monomial}).
	\end{itemize}
\end{definition}

\noindent Here we will rely on the following property of \pisiE-field extensions; the first statement has been shown in~\cite{Karr:81} for \pisiE-fields. The second statement appears also in~\cite{SchneiderProd:20}.

\begin{proposition}\label{Prop:ConstantStable}
	Let $\dfield{\E}{\sigma}$ be a \pisiE-field/\pisiE-ring extension of a difference field $\dfield{\F}{\sigma}$ with $\K=\const{\F}{\sigma}$. Then:
	\begin{enumerate}
		\item For $k>1$, $\dfield{\E}{\sigma^{k}}$ is a \pisiE-field/\pisiE-ring extension of $\dfield{\F}{\sigma^{k}}$.
		\item If $\dfield{\F}{\sigma}$ is constant-stable, $\dfield{\E}{\sigma}$ is constant-stable.
	\end{enumerate}
\end{proposition}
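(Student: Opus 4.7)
The plan is to prove part~(1) by induction on the tower length $e$, and then deduce part~(2) as an immediate corollary. Both parts treat the field and ring variants uniformly, with the ring case replacing rational function extensions by Laurent polynomial ring extensions in the $\Pi$-case; the arguments go through identically.

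Reducing to a single step $\HH\leq\HH\lr{t}$, the shift condition transfers by direct computation: a $\Pi$-monomial with $\sigma(t)/t=\alpha\in\HH^*$ satisfies $\sigma^k(t)/t=\prod_{j=0}^{k-1}\sigma^j(\alpha)\in\HH^*$, and a $\Sigma$-monomial with $\sigma(t)-t=\beta\in\HH$ satisfies $\sigma^k(t)-t=\sum_{j=0}^{k-1}\sigma^j(\beta)\in\HH$. The main obstacle is the no-new-$\sigma^k$-constants condition $\const{\HH\lr{t}}{\sigma^k}=\const{\HH}{\sigma^k}$ given the corresponding property for $\sigma$. This is the content of Karr's original result for $\Pi\Sigma$-fields~\cite{Karr:81} (extended in~\cite{SchneiderProd:20} to the ring setting considered here). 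The key idea in that argument is to expand a putative $\sigma^k$-constant $g\in\HH\lr{t}\setminus\HH$ in (Laurent) powers of $t$ and compare coefficients under the $\sigma^k$-action: in the $\Sigma$-case one reduces, by a degree induction, to showing that the cocycle $\sum_{j=0}^{k-1}\sigma^j(\beta)$ is not a $\sigma^k$-coboundary in $\HH$, which is forced by the defining property of $\Sigma$-monomials; in the $\Pi$-case one analogously exploits the graded decomposition induced by the diagonal action $\sigma^k(ht^n)=\sigma^k(h)\,\big(\prod_{j=0}^{k-1}\sigma^j(\alpha)\big)^n\,t^n$ and reduces homogeneous components to the known $\sigma$-case.

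Part~(2) then follows by chaining equalities: part~(1) yields $\const{\E}{\sigma^k}=\const{\F}{\sigma^k}$, constant-stability of $\F$ yields $\const{\F}{\sigma^k}=\const{\F}{\sigma}$, and the $\Pi\Sigma$-extension hypothesis yields $\const{\F}{\sigma}=\const{\E}{\sigma}$. Hence $\const{\E}{\sigma^k}=\const{\E}{\sigma}$ for every $k\geq 1$, which is exactly the constant-stability of $\dfield{\E}{\sigma}$.
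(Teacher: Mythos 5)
Your part~(2) is exactly the paper's argument: the same chain $\const{\E}{\sigma^k}=\const{\F}{\sigma^k}=\const{\F}{\sigma}=\const{\E}{\sigma}$, so nothing to add there. For part~(1) you take a genuinely different route. The paper does \emph{not} induct on the tower or compare coefficients; it verifies the shape of $\sigma^k(t_i)$ exactly as you do, but disposes of the constant condition in one global stroke: if some $a\in\E\setminus\F$ satisfied $\sigma^k(a)=a$, then the norm $h=a\,\sigma(a)\cdots\sigma^{k-1}(a)$ would satisfy $\sigma(h)=h$, hence $h\in\const{\E}{\sigma}=\const{\F}{\sigma}\subseteq\F$, while \cite[Lemma~31.(3)]{SchneiderProd:20} guarantees $h\notin\F$ --- a contradiction. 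This avoids any case distinction between \piE- and \sigmaE-monomials and any degree induction, at the price of importing that one lemma.

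The gap in your version sits precisely where you write that the cocycle $\sum_{j=0}^{k-1}\sigma^j(\beta)$ not being a $\sigma^k$-coboundary ``is forced by the defining property of $\Sigma$-monomials.'' The defining property (Definition~\ref{Def:PSFieldExt}) only gives $\const{\HH\lr{t}}{\sigma}=\const{\HH}{\sigma}$, equivalently (by Karr's characterization) that $\beta$ is not a $\sigma$-coboundary in $\HH$; that $\sum_{j}\sigma^j(\beta)$ is not a $\sigma^k$-coboundary is exactly the content of the proposition being proved, so asserting it as immediate is circular. It is true, but needs an extra idea: if $\sigma^k(w)-w=\sum_{j=0}^{k-1}\sigma^j(\beta)$ for some $w\in\HH$, then $g=t-w$ is $\sigma^k$-invariant, and its trace $g+\sigma(g)+\cdots+\sigma^{k-1}(g)=k\,t+u$ with $u\in\HH$ is $\sigma$-invariant yet lies outside $\HH$ (the constant field has characteristic zero, so $k\neq0$), contradicting the no-new-$\sigma$-constants hypothesis. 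The same issue recurs in your \piE-case (``reduces homogeneous components to the known $\sigma$-case''), where the multiplicative norm plays the analogous role. Note that this trace/norm trick is precisely the paper's argument, applied once globally rather than monomial by monomial --- so once the gap is filled honestly you essentially land on the paper's proof, wrapped in an induction it does not need.
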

\begin{proof}
	(1) Let $k>1$ and suppose that there is an $a\in\E\setminus\F$ with with $\sigma^k(a)=a$. Define $h=a\,\sigma(a)\dots\sigma^{k-1}(a)$. By 
	\cite[Lemma~31.(3)]{SchneiderProd:20} if follows that $h\not\in\F$. Since $\frac{\sigma(h)}{h}=\frac{\sigma^k(a)}{a}=1$, it follows that $h\in\const{\E}{\sigma}=\const{\F}{\sigma}\subseteq\F$, a contradiction. Note that for any \sigmaE-monomial $t$ with $\sigma(t)=t+\beta$ we have $\sigma^k(t)-t=\sum_{i=0}^{k-1}\sigma^i(\beta)$ and for any \piE-monomial $t$ with $\sigma(t)=\alpha\,t$ we have $\sigma^k(t)/t=\dfact{a}{k}{\sigma}$. Thus the automorphism $\sigma^k$ satisfies the requirements and consequently $\dfield{\E}{\sigma^k}$ is a \pisiE-field/\pisiE-ring extension of $\dfield{\F}{\sigma^k}$.\\
	(2) Suppose that $\dfield{\F}{\sigma}$ is constant-stable and let $k>1$. Then $\const{\F}{\sigma^k}=\const{\F}{\sigma}$. By statement (1), $\const{\F}{\sigma}=\const{\E}{\sigma}$ and thus $\const{\E}{\sigma^k}=\const{\E}{\sigma^k}$. Hence $\dfield{\E}{\sigma}$ is constant-stable.
\end{proof}

In this particular scenario, we can refine Theorem~\ref{Thm:IdempotentSolver} as follows.

\begin{proposition}\label{Prop:BasicStrategyInBasicExt}
Let $\dfield{\F}{\sigma}$ be a constant-stable difference field with constant field $\K$, and let $\dfield{\E}{\sigma}$ with~\eqref{Equ:SingleRPS} be a basic \rpisiE-ring extension with only one \rE-monomial $y$ with $\frac{\sigma(y)}{y}\in\K$ of order $\lambda$. Then one can solve non-degenerated PLDEs in  $\dfield{\E}{\sigma}$ (resp.\ in $\dfield{Q(\E)}{\sigma}$) if $\dfield{\E}{\sigma}$ is computable and one can solve PLDEs in the \pisiE-ring extension $\dfield{\tilde{\E}}{\sigma_0}$ (resp.\ \pisiE-field extension $\dfield{\Q(\tilde{\E})}{\sigma_0}$) of $\dfield{\F}{\sigma^{\lambda}}$ with~\eqref{Equ:TildeE}.
\end{proposition}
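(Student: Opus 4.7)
The plan is to deduce both statements from Theorem~\ref{Thm:IdempotentSolver} applied to $\dfield{\E}{\sigma}$ (respectively $\dfield{Q(\E)}{\sigma}$) equipped with the idempotents $e_0,\dots,e_{\lambda-1}$ of~\eqref{Equ:SpecificEi}. By Theorem~\ref{Thm:Interlacing}(1--2) together with the componentwise description~\eqref{Equ:QuotRep}, both rings are idempotent difference rings of order $\lambda$ in the sense of Definition~\ref{Def:IdemPotentDR}, so the structural hypothesis of Theorem~\ref{Thm:IdempotentSolver} is immediately available. Computability is part of the given assumptions (and for $\dfield{Q(\E)}{\sigma}$ it follows by working componentwise via the idempotent representation).

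Next I would verify the constant-field condition $\const{e_0\,\E}{\sigma^{\lambda}}=e_0\,\K$. In the ring case this is precisely Theorem~\ref{Thm:Interlacing}(5), which applies because $\dfield{\F}{\sigma}$ is assumed constant-stable. For the quotient-field case I would additionally invoke Theorem~\ref{Thm:Interlacing}(3) to view $\dfield{Q(\tilde{\E})}{\sigma_0}$ as a \pisiE-field extension of $\dfield{\F}{\sigma^{\lambda}}$, apply Proposition~\ref{Prop:ConstantStable} to conclude $\const{Q(\tilde{\E})}{\sigma_0}=\const{\F}{\sigma^{\lambda}}=\K$, and transport this via the difference-ring isomorphism $\tau$ extended to the quotient fields to obtain $\const{e_0\,Q(\E)}{\sigma^{\lambda}}=e_0\,\K$.

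The remaining ingredient demanded by Theorem~\ref{Thm:IdempotentSolver} is the ability to solve PLDEs in $\dfield{e_0\,\E}{\sigma^{\lambda}}$ (respectively $\dfield{e_0\,Q(\E)}{\sigma^{\lambda}}$), and this is exactly where the hypothesis on $\dfield{\tilde{\E}}{\sigma_0}$ (respectively $\dfield{Q(\tilde{\E})}{\sigma_0}$) enters. Since $\tau(f)=e_0\,f$ is a difference-ring isomorphism from $\dfield{\tilde{\E}}{\sigma_0}$ onto $\dfield{e_0\,\E}{\sigma^{\lambda}}$ (and likewise on the quotient fields), any given PLDE in the target can be pulled back along $\tau^{-1}$, solved in the assumed setting, and pushed forward via $\tau$ to recover a basis of the original solution space. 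Feeding these pieces into Theorem~\ref{Thm:IdempotentSolver} then produces a basis of $V(\vect{a},\vect{f},\E)$ (respectively $V(\vect{a},\vect{f},Q(\E))$) for every non-degenerated $\vect{a}$ and arbitrary $\vect{f}$. The only delicate point I anticipate is the bookkeeping for the constants: solutions on the $\tilde{\E}$-side are parameterized over $\K$ whereas those on the $e_0\,\E$-side are parameterized over $e_0\,\K$, but the canonical identification $c\leftrightarrow e_0\,c$ makes the translation lossless, so I do not expect this to pose a real obstacle.
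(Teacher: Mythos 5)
Your proposal is correct and follows essentially the same route as the paper: recognize $\dfield{\E}{\sigma}$ (resp.\ $\dfield{Q(\E)}{\sigma}$) as an idempotent difference ring via Theorem~\ref{Thm:Interlacing}, use constant-stability of $\dfield{\F}{\sigma}$ to establish $\const{e_0\,\E}{\sigma^{\lambda}}=e_0\,\K$, transport the assumed PLDE solver through the isomorphism between $\dfield{\tilde{\E}}{\sigma_0}$ and $\dfield{e_0\,\E}{\sigma^{\lambda}}$, and conclude with Theorem~\ref{Thm:IdempotentSolver}. The paper's proof is merely terser, stating the constant chain $\const{Q(\tilde{\E})}{\sigma_0}=\const{\tilde{\E}}{\sigma_0}=\const{\F}{\sigma^{\lambda}}=\const{\F}{\sigma}=\K$ directly, while you spell out the bookkeeping via $\tau$ and the identification $c\leftrightarrow e_0\,c$, which the paper leaves implicit.
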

\begin{proof}
$\dfield{\tilde{\E}}{\sigma_0}$ is a basic \pisiE-ring extension of $\dfield{\F}{\sigma^{\lambda}}$ by Theorem~\ref{Thm:Interlacing}.(3), and thus taking the quotient field $Q(\tilde{\E})$, $\dfield{\Q(\tilde{\E})}{\sigma_0}$ is a \pisiE-field extension of $\dfield{\F}{\sigma^{\lambda}}$ by iterative application of~\cite[Cor.~2.6]{DR3}. 
Since $\dfield{\F}{\sigma}$ is constant-stable, we get $\const{Q(\tilde{\E})}{\sigma_0}=\const{\tilde{\E}}{\sigma_0}=\const{\F}{\sigma^{\lambda}}=\const{\F}{\sigma}=\K$.
Finally, since we can solve PLDEs in $\dfield{\tilde{\E}}{\sigma_0}$ (resp.\ in $\dfield{Q(\tilde{\E})}{\sigma_0}$) by assumption, we can apply Theorem~\ref{Thm:IdempotentSolver} and can compute all solutions of non-degenerated PLDEs in $\dfield{\E}{\sigma}$ (resp.\ in $\dfield{Q(\E)}{\sigma}$).
\end{proof}

\subsection{The general case: basic \rpisiE-ring extensions over \pisiE-field extensions}

To activate Proposition~\ref{Prop:BasicStrategyInBasicExt} we have to take an appropriate difference field  $\dfield{\F}{\sigma}$ such that (1) it is constant-stable and such that (2) PLDEs can be solved in $\dfield{\tilde{\E}}{\sigma_0}$. As it turns out, both properties can be fulfilled if $\dfield{\F}{\sigma}$ itself is a \pisiE-field extension of a difference field $\dfield{\GG}{\sigma}$ that enjoys certain algorithmic properties. In this situation, the first property can be settled using Proposition~\ref{Prop:ConstantStable} from above.
To deal with the second property, we will introduce the following problems; a certain subset of them have been introduced originally in~\cite{Schneider:06d} (by analyzing Karr's (telescoping) algorithms in~\cite{Karr:81}).

\begin{definition}[\cite{ABPS:21}]\label{Def:computable}
	A difference field $\dfield{\F}{\sigma}$ with constant field $\K$ is
	\emph{$\sigma$-com\-pu\-ta\-ble} if $\dfield{\E}{\sigma}$ is computable and the following holds.
	
	\begin{enumerate}
		\item One can factor multivariate polynomials
		over $\F$.
		
		\item $\dfield{\F}{\sigma^s}$ is \emph{torsion free} for any $s\in\Z^*$, i.e.,

\vspace*{-0.45cm}

		\begin{equation*}
		\quad\quad\forall s,r\in\Z^*\;\forall f,g\in \F^*:\,
		f=\tfrac{\sigma^s(g)}{g} \wedge f^r=1\Rightarrow f=1.
		\end{equation*}
		
\vspace*{-0.1cm}

		\item The \emph{$\Pi$-Regularity problem} is solvable:
		Given $\dfield{\F}{\sigma}$ and $f,g\in\F^*$; find, if possible, an $n\geq0$ with
		$\dfact{f}{n}{\sigma}=g$.
		
		\item The \emph{$\Sigma$-Regularity problem} is solvable: Given $\dfield{\F}{\sigma}$, $r\in\Z^*$, $f,g\in \F^*$;
		find, if possible, $n\geq0$ with
		$\dfact{f}{0}{\sigma^r}+\cdots+\dfact{f}{n}{\sigma^r}=g$.
		
		\item The \emph{parameterized pseudo-orbit problem} is solvable:
		Given $\vect{f}=(f_1,\dots,f_n)\in (\F^*)^d$;
		compute a $\Z$-basis of the module

\vspace*{-0.45cm}

		$$\quad\quad M(\vect{f},\F)=\{(z_1,\dots,z_d)\in\Z^n\mid \exists g\in\F^* \frac{\sigma(g)}{g}=f_1^{z_1}\dots f_d^{z^d}\}.$$

\vspace*{-0.15cm}

		\item There is an algorithm that can \emph{compute all the hypergeometric candidates for
			equations with coefficients in $\dfield{\F}{\sigma}$}: Given a nonzero operator $L \in\F[\sigma]$;
		compute a finite set $S \subset\F$ such that for any $r \in\F^*$,
		if $\sigma - r$ is a right factor of $L$ in $\F[\sigma]$,
		then $r = u \frac{\sigma(v)}{v}$ for some $u \in S$ and $v \in\F^*$.
		\item \emph{PLDEs are solvable} in $\dfield{\F}{\sigma}$: Given $\vect{0}\neq\vect{a}\in\F^{m+1}$, $\vect{f}\in\F^d$; compute a $\K$-basis of $V(\vect{a},\vect{f},\F)$.
	\end{enumerate}
\end{definition}

Then using the brandnew framework summarized in~\cite[Thm.~10]{ABPS:21}, we obtain the following result which has been implemented within the summation package~\texttt{Sigma}.

\begin{theorem}[\cite{ABPS:21}]\label{Thm:PiSiSolver}
	Let $\dfield{\E}{\sigma}$ be a (nested) \pisiE-field extension of $\dfield{\F}{\sigma}$. If $\dfield{\F}{\sigma}$ is $\sigma$-computable, then also $\dfield{\E}{\sigma}$ is $\sigma$-computable.
\end{theorem}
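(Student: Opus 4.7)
The plan is to induct on the number of extension steps in the tower $\F=\HH_0\leq\HH_1\leq\cdots\leq\HH_e=\E$ supplied by Definition~\ref{Def:PSFieldExt}, so it suffices to treat a single \piE- or \sigmaE-field monomial extension $\F(t)$ of $\F$ and to verify that each of the seven properties in Definition~\ref{Def:computable} lifts from $\F$ to $\F(t)$. For each property I would exhibit a reduction to finitely many subproblems that are already solvable by the $\sigma$-computability assumption on $\F$.

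First I would dispatch the structural properties (1) and (2): factorization over $\F(t)$ follows from factorization over $\F[t]$ via Gauss's lemma, while torsion-freeness of $\dfield{\F(t)}{\sigma^s}$ reduces, by a $t$-degree and leading-coefficient analysis of an alleged identity $f^r=1$ with $f=\sigma^s(g)/g$, to torsion-freeness of $\dfield{\F}{\sigma^s}$. Properties (3) and (4) are handled by adapting Karr's reduction strategies: the $\Pi$- and $\Sigma$-regularity problems in $\F(t)$ are solved by expanding along the $t$-adic valuation at the irreducible factors of $\F[t]$ and the pole/degree at infinity, which produces a finite collection of instances of the same problems in $\F$ together with explicit degree bounds. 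For (5), I would first extract the $\Z$-relations among the $t$-adic valuations of the input vector $\vect{f}$, thereby reducing to a pseudo-orbit problem one tower-level below, and then close the computation with a final instance of (3) or (4) in $\F$.

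The main obstacle is the coupled pair (6)--(7), which forms the technical core of~\cite{ABPS:21}. For (6), the candidates $r\in\F(t)^*$ for first-order right factors of a given $L$ are parametrized by their valuations at the irreducible polynomials of $\F[t]$; only finitely many valuation patterns survive the compatibility constraints imposed by $L$, and each surviving pattern yields a finite set $S\subset\F$ after solving auxiliary instances of (3) and of PLDEs over $\F$. For (7), I would follow the classical recipe: given a PLDE with coefficients in $\F(t)$, first bound the denominator of any potential solution $g\in\F(t)$ from the \piE/\sigmaE-monomial structure of $t$, then bound the polynomial degree of the numerator using the leading-term equations, and finally translate the remaining coefficient-matching problem into a PLDE over $\F$ with an enlarged parameter vector $\vect{f}'$. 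Termination of the induction is guaranteed because the tower height strictly decreases, and every auxiliary subroutine invoked in the reduction is one of (1)--(6) applied in $\F$, which is precisely why Definition~\ref{Def:computable} has to be formulated as a bundled package. The genuinely delicate part is the interplay between the degree/denominator bounds needed in step (7) and the hypergeometric-candidate computation in step (6), since each calls the other; designing the recursion so that every call strictly decreases a well-founded measure is the bookkeeping heart of the argument.
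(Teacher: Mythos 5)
The paper does not actually prove Theorem~\ref{Thm:PiSiSolver}: it is imported verbatim from \cite[Thm.~10]{ABPS:21}, so there is no in-paper argument to compare yours against. Measured against what that reference does, your outline identifies the right skeleton --- induction on the tower of Definition~\ref{Def:PSFieldExt}, reducing to a single \piE- or \sigmaE-field monomial and lifting each of the seven closure properties of Definition~\ref{Def:computable} one level, with the mutual dependence of properties (6) and (7) as the crux. That diagnosis is accurate, and it is consistent with the fragments of the machinery that this paper does reprove (Proposition~\ref{Prop:ConstantStable} for the $\sigma^k$-stability needed in property (2), Proposition~\ref{Prop:LiftSigmaSolver} for the \sigmaE-case of (7), and Lemmas~\ref{Lemma:atMostOneSol} and~\ref{Lemma:PiBounds} for the \piE-degree bounds, which indeed invoke the hypergeometric candidates and the pseudo-orbit problem exactly as you predict).

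However, as a proof your text has a genuine gap: every step that carries mathematical weight is announced rather than performed. The denominator bound for solutions in $\F(t)$, the degree bounds for \piE-monomials, the finiteness of the ``surviving valuation patterns'' in your treatment of (6), and above all the well-founded measure that makes the mutual recursion between (6) and (7) terminate are each named as ``the delicate part'' but not supplied; these are precisely the points where \cite{ABPS:21} spends its effort (and where, e.g., the uniqueness statement of Lemma~\ref{Lemma:atMostOneSol} is needed to turn a candidate set into an actual bound). In particular, asserting that ``only finitely many valuation patterns survive'' and that ``each call strictly decreases a well-founded measure'' without exhibiting the bound or the measure is exactly the part of the theorem that cannot be taken on faith. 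So the proposal is a correct roadmap to the cited proof, but it does not itself establish the statement.
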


In particular, using~\cite{Schneider:06d,ABPS:21} (based on~\cite{Karr:81}) the properties given in Definition~\ref{Def:computable} simplify in the special case $\sigma=\text{id}$ as follows.

\begin{theorem}\label{Thm:GroundField}
	Let $\K$ be a computable field where
	\begin{enumerate}
		\item polynomials can be factored in $\K[t_1,\dots,t_e]$,
		\item a basis of $\{(z_1,\dots,z_d)\in\Z^d\mid 1=\prod_{i=1}^d c_i^{z_i}\}$ can be computed,
		\item one can recognize if $c \in k$ is an integer,
	\end{enumerate}
then $\dfield{\K}{\sigma}$ with $\const{K}{\sigma}=\K$  is $\sigma$-computable.
\end{theorem}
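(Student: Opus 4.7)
The plan is to verify each of the seven items of Definition~\ref{Def:computable} for $\dfield{\K}{\mathrm{id}}$ using the three stated hypotheses; many of them collapse drastically once $\sigma$ is the identity. Computability of $\dfield{\K}{\mathrm{id}}$ itself is immediate since $\K$ is a computable field and $\sigma=\mathrm{id}$ is trivially evaluated. The polynomial factorization requirement~(1) is precisely hypothesis~(1) of the theorem. Torsion-freeness~(2) holds vacuously because $\sigma^s(g)/g=1$ for all $g\in\K^*$. For the parameterized pseudo-orbit problem~(5), the condition $\sigma(g)/g=f_1^{z_1}\cdots f_d^{z_d}$ reduces to $f_1^{z_1}\cdots f_d^{z_d}=1$, so a $\Z$-basis of $M(\vect{f},\K)$ is exactly what hypothesis~(2) delivers. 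For hypergeometric candidates~(6), with $\sigma=\mathrm{id}$ the Ore ring $\K[\sigma]$ coincides with the ordinary commutative polynomial ring, and since $\sigma(v)/v=1$ for every $v\in\K^*$, a valid $S$ only needs to list those $r\in\K^*$ for which $\sigma-r$ divides $L$; these are exactly the roots of $L$ in $\K$, computable via hypothesis~(1).

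For $\Pi$-regularity~(3) one has $\dfact{f}{n}{\sigma}=f^n$, so the task becomes to decide whether $f^n=g$ for some $n\geq 0$. I would invoke the pseudo-orbit procedure from~(5) on the pair $(f,g)$ to obtain a $\Z$-basis of $\{(z_1,z_2)\in\Z^2\mid f^{z_1}g^{z_2}=1\}$ and then inspect, by elementary integer linear algebra on a module of rank at most $2$, whether it contains a vector $(n,-1)$ with $n\geq 0$. For $\Sigma$-regularity~(4) the sum collapses to $\sum_{i=0}^{n}f^i=g$: if $f=1$ this reads $n+1=g$ and is decided by hypothesis~(3); otherwise it rewrites as $f^{n+1}=g(f-1)+1$, which is resolved by the $\Pi$-regularity procedure just built.

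Finally, for the PLDE problem~(7), equation~\eqref{Equ:PLDE} collapses to $A\,g=\sum_{j=1}^{d}c_j f_j$ with $A:=\sum_{i=0}^{m}a_i\in\K$. If $A\neq 0$, a $\K$-basis of $V(\vect{a},\vect{f},\K)$ is $\{(\vect{e}_j,A^{-1}f_j)\}_{1\leq j\leq d}$, where $\vect{e}_j$ is the $j$th standard basis vector of $\K^d$. If $A=0$, then $g$ is unconstrained: I would compute a basis of the $\K$-kernel of the linear form $(c_1,\dots,c_d)\mapsto\sum_j c_j f_j$ by ordinary linear algebra, pair each kernel vector with $g=0$, and adjoin $(\vect{0},1)$ to account for the free value of $g$. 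None of the seven reductions is subtle, so I do not anticipate a serious obstacle; the only step that genuinely calls for a case split is the dichotomy $f=1$ versus $f\neq 1$ in~(4), which is precisely why hypothesis~(3) on detecting integer values appears in the statement.
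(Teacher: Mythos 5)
Your proposal is correct: the paper itself gives no written proof of this theorem (it defers to the cited references with the remark that the seven conditions of Definition~\ref{Def:computable} ``simplify in the special case $\sigma=\mathrm{id}$''), and your case-by-case verification is exactly that intended simplification. All seven reductions check out, including the two genuinely non-trivial ones --- deducing $\Pi$-regularity from the lattice computation of hypothesis~(2) and reducing $\Sigma$-regularity to it via the geometric-sum identity with the $f=1$ case handled by integer recognition.
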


%
%
%
%
%
%
%

We can now state our first algorithmic framework to solve non-degenerated PLDEs in $(R)\Pi\Sigma$-extensions..

\begin{theorem}\label{Thm:QuotientRPSRingSolver}
	Let $\dfield{\F}{\sigma}$ be a \pisiE-field extension of a difference field $\dfield{\GG}{\sigma}$ and let
	$\dfield{\E}{\sigma}$ be a basic \rpisiE-ring extension of $\dfield{\F}{\sigma}$ 
	with one \rE-monomial $y$ with $\frac{\sigma(y)}{y}\in\const{\F}{\sigma}$ of order $\lambda$. Then one can solve non-degenerated PLDEs in the quotient ring $\dfield{Q(\E)}{\sigma}$ or in $\dfield{\E}{\sigma}$ if one of the following holds:
	\begin{enumerate} 	
		\item $\dfield{\GG}{\sigma}$ is constant-stable and
		$\dfield{\GG}{\sigma^{\lambda}}$ is $\sigma$-computable.
		\item $\const{\GG}{\sigma}=\GG$ satisfies the properties in Theorem~\ref{Thm:GroundField}.  
		\item $\const{\GG}{\sigma}=\GG$ is a rat.\ function field over an alg.\ number field. 
	\end{enumerate}  
\end{theorem}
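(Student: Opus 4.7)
My plan is to reduce all three cases to a single application of Proposition~\ref{Prop:BasicStrategyInBasicExt}. That proposition demands two things: (a) $\dfield{\F}{\sigma}$ is constant-stable, and (b) PLDEs are solvable in the \pisiE-ring extension $\dfield{\tilde{\E}}{\sigma_0}$ of $\dfield{\F}{\sigma^{\lambda}}$ (resp.\ in the \pisiE-field extension $\dfield{Q(\tilde{\E})}{\sigma_0}$). So the strategy is to establish that under any of the three hypotheses both conditions hold, then invoke Proposition~\ref{Prop:BasicStrategyInBasicExt}.

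First I would handle case~(1) directly, which is the general one. Since $\dfield{\GG}{\sigma}$ is constant-stable and $\dfield{\F}{\sigma}$ is a \pisiE-field extension of $\dfield{\GG}{\sigma}$, Proposition~\ref{Prop:ConstantStable}(2) yields that $\dfield{\F}{\sigma}$ is constant-stable, giving (a). For (b), apply Proposition~\ref{Prop:ConstantStable}(1) to see that $\dfield{\F}{\sigma^{\lambda}}$ is a \pisiE-field extension of $\dfield{\GG}{\sigma^{\lambda}}$. By assumption $\dfield{\GG}{\sigma^{\lambda}}$ is $\sigma$-computable, so Theorem~\ref{Thm:PiSiSolver} gives that $\dfield{\F}{\sigma^{\lambda}}$ is $\sigma$-computable. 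By Theorem~\ref{Thm:Interlacing}(3) and iterated application of \cite[Cor.~2.6]{DR3} (as used in the proof of Prop.~\ref{Prop:BasicStrategyInBasicExt}), $\dfield{Q(\tilde{\E})}{\sigma_0}$ is a \pisiE-field extension of $\dfield{\F}{\sigma^{\lambda}}$, so Theorem~\ref{Thm:PiSiSolver} applied once more shows that $\dfield{Q(\tilde{\E})}{\sigma_0}$ is $\sigma$-computable. In particular PLDEs are solvable there, which settles~(b) for the quotient-ring statement. For the statement about $\dfield{\E}{\sigma}$ itself, the analogous \pisiE-ring version of Theorem~\ref{Thm:PiSiSolver} (again from \cite{ABPS:21}) applied to the basic \pisiE-ring extension $\dfield{\tilde{\E}}{\sigma_0}$ of $\dfield{\F}{\sigma^{\lambda}}$ supplies the PLDE solver in $\dfield{\tilde{\E}}{\sigma_0}$. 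An appeal to Proposition~\ref{Prop:BasicStrategyInBasicExt} then finishes case~(1).

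Next I would reduce (2) and (3) to (1). For case~(2), the assumption $\const{\GG}{\sigma}=\GG$ means $\sigma=\mathrm{id}$ on $\GG$, so trivially $\const{\GG}{\sigma^k}=\GG=\const{\GG}{\sigma}$ for every $k\geq1$, i.e.\ $\dfield{\GG}{\sigma}$ is constant-stable and $\dfield{\GG}{\sigma^{\lambda}}=\dfield{\GG}{\mathrm{id}}$. Under the hypotheses of Theorem~\ref{Thm:GroundField}, $\dfield{\GG}{\mathrm{id}}$ is $\sigma$-computable, so case~(1) applies. For case~(3), $\GG$ is a rational function field over an algebraic number field, and such fields are well known to satisfy the three algorithmic assumptions of Theorem~\ref{Thm:GroundField}: multivariate factorization is available (Trager, Lenstra), integer recognition is trivial, and the $\Z$-basis of multiplicative relations $\prod c_i^{z_i}=1$ can be computed by combining valuation analysis at the irreducible polynomials of the rational-function part with the algorithm for multiplicative dependencies in algebraic number fields (Ge, and its extensions). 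Hence case~(3) reduces to case~(2), which in turn reduces to~(1).

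The main obstacle I expect is the bookkeeping at the transition from $\sigma$ to $\sigma^{\lambda}$: one must carefully verify that constant-stability of $\GG$ (together with the \pisiE-tower structure of $\F$ over $\GG$) really is enough to guarantee that the $\Pi$- and $\Sigma$-monomials of $\F$ remain $\Pi$- and $\Sigma$-monomials under $\sigma^{\lambda}$ with the same constant field, so that Theorem~\ref{Thm:PiSiSolver} can legitimately be applied with the shifted automorphism. This is exactly what Proposition~\ref{Prop:ConstantStable} packages for us, and once it is invoked the rest of the argument is a chain of direct applications of the cited results; the only genuinely nontrivial input from outside the excerpt is the verification in case~(3) that rational function fields over number fields satisfy the multiplicative-dependency condition of Theorem~\ref{Thm:GroundField}.
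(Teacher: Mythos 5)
Your proposal follows the paper's proof almost verbatim: case (1) via Proposition~\ref{Prop:ConstantStable} and Theorem~\ref{Thm:PiSiSolver} feeding into Proposition~\ref{Prop:BasicStrategyInBasicExt}, and cases (2) and (3) reduced to (1) exactly as in the paper (the paper cites \cite{Ge:93} and \cite[Thm.~3.5]{Schneider:05c} for case (3), matching your appeal to Ge's algorithm). The one place you diverge is the claim that one can solve PLDEs \emph{in $\dfield{\E}{\sigma}$ itself}: you invoke ``the analogous \pisiE-ring version of Theorem~\ref{Thm:PiSiSolver}'' to get a solver in the ring $\dfield{\tilde{\E}}{\sigma_0}$, but no such ring-level result is stated in the generality needed here (the ring solver the paper does establish, Theorem~\ref{Thm:PiSigmaRingSolverRefined}, carries the extra restriction $\frac{\sigma(t_i)}{t_i}\in\F^*$ on the \piE-monomials, which a basic \rpisiE-extension need not satisfy). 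The paper sidesteps this entirely: it computes a basis of $V(\vect{a},\vect{f},Q(\E))$ and then filters out the subspace of solutions lying in $\K^d\times\E$ by linear algebra over $\K$. Either route works, but yours leaves that sub-claim resting on an uncited and not obviously available result, whereas the paper's filtering step is self-contained; you should replace the appeal to a ring version of Theorem~\ref{Thm:PiSiSolver} by this linear-algebra extraction.
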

\begin{proof}
	(1) Since $\dfield{\GG}{\sigma}$ is constant-stable, it follows that $\dfield{\F}{\sigma}$ is constant-stable by Proposition~\ref{Prop:ConstantStable}.(2). Furthermore, $\dfield{\F}{\sigma^{\lambda}}$ is a \pisiE-field extension of $\dfield{\GG}{\sigma^{\lambda}}$ by Proposition~\ref{Prop:ConstantStable}.(1) and thus
	$\dfield{Q(\tilde{\E})}{\sigma_0}$ is a \pisiE-field extension of $\dfield{\GG}{\sigma^{\lambda}}$. Since $\dfield{\GG}{\sigma^{\lambda}}$ is $\sigma$-computable, we conclude with Theorem~\ref{Thm:PiSiSolver} 
	that also $\dfield{Q(\tilde{\E})}{\sigma_0}$ is $\sigma$-computable, in particular property~(7) in Definition~\ref{Def:computable} holds. Hence we can apply Proposition~\ref{Prop:BasicStrategyInBasicExt} and can solve all non-degenerated PLDEs in $\dfield{Q(\E)}{\sigma}$. Given a basis in $Q(\E)$ one can filter out a basis of the subspace in $\E$ by linear algebra\footnote{In Section~\ref{Sec:ImprovedVersions} we will provide improved algorithms to accomplish this task directly.}.\\
	(2) Since $\GG=\const{\GG}{\sigma}$, $\sigma|_{\GG}=\text{id}$. Thus $\dfield{\GG}{\sigma}$ is trivially constant-stable. In addition, if the properties of Theorem~\ref{Thm:GroundField} are fulfilled, $\dfield{\GG}{\sigma}$ is $\sigma$-computable and thus we can apply part (1).\\
	(3) By~\cite{Ge:93} and~\cite[Thm.~3.5]{Schneider:05c} if follows that the algorithms required in Theorem~\ref{Thm:GroundField} are available. Thus we can apply part (2).
\end{proof}

\begin{remark}\label{Remark:MixedCase}
	 Theorem~\ref{Thm:QuotientRPSRingSolver} (Case~3) covers, e.g., the \emph{rational} ($v=0$) or the \emph{mixed multibasic difference field} $\dfield{\GG}{\sigma}$ with $\GG=\K(x,x_1,\dots,x_v)$ where $\K=K(q_1\dots,q_v)$ is a rational function field ($K$ itself is a rational function field over an algebraic number field) and with $\sigma|_{\K}=\text{id}$, $\sigma(x)=x+1$ and $\sigma(x_i)=q_i\,x_i$ for $1\leq i\leq v$.
\end{remark}

\subsection{Simplified algorithms for special ring cases}\label{Sec:ImprovedVersions}


The PLDE solver summarized in Theorem~\ref{Thm:QuotientRPSRingSolver} assumes that $\dfield{\GG}{\sigma}$ is $\sigma$-computable. In the following we restrict ourselves to some interesting  sub-classes of \rpisiE-ring extensions where the $\Sigma$- and $\Pi$-regularity problem in Definition~\ref{Def:computable} (but also the hidden shift-equivalence problem within the tower of extensions) can be avoided.
As a consequence one ends up at lighter implementations where most of the highly recursive algorithms from~\cite{Karr:81} can be skipped.

Let $\dfield{\A\lr{t}}{\sigma}$ be a \pisiE-ring extension of $\dfield{\A}{\sigma}$ with constant field $\K=\const{\A}{\sigma}$. Assume in addition that $\A$ is an integral domain and that one can solve PLDEs in $\dfield{\A}{\sigma}$. Then we can apply the following tactic~\cite{Schneider:05a} (which is inspired by~\cite{Karr:81} and is also the backbone strategy in~\cite{ABPS:21}) to find a basis of $V=V(\vect{a},\vect{f},\A\lr{t})$ with $\vect{0}\neq\vect{a}=(a_0,\dots,a_m)\in\A\lr{t}^{m+1}$ and $\vect{f}=(f_1,\dots,f_d)\in\A\lr{t}^d$. First, we bound the degree of the possible solutions: namely, we compute $a,b\in\Z$ such that
for any $(c_1,\dots,c_d,\sum_{k=a'}^{b'} g_i\,t^i)\in V$ we have $a\leq a'$ and $b'\leq b$; if $t$ is a \sigmaE-monomial we set $a=0$ and search for $b$ only. Then given such bounds $a,b$, we make the ansatz~\eqref{Equ:PLDE} with unknown $c_0,\dots,c_d\in\K$ and $g=\sum_{k=a'}^{b'} g_i\,t^i$ with unknown $g_a,\dots,g_b\in\A$. By comparing coefficients in~\eqref{Equ:PLDE} w.r.t.\ to the highest arising term we obtain a PLDE in $\dfield{\A}{\sigma}$ which has $c_1,\dots,c_d$ and $g_b\in\A$ as solution. Solving this PLDE yields all possible candidates for $g_b$. Thus plugging these choices into~\eqref{Equ:PLDE} we can proceed recursively (by degree reduction) to nail down $g_b$ and the remaining coefficients $g_a,\dots,g_{b-1}$.

Due to~\cite[Theorem~7]{ABPS:21} it follows that one can determine $b\in\N$ and $a=0$ for a \sigmaE-monomial $t$ if one can solve PLDEs in $\dfield{\A}{\sigma}$. Thus activating this machinery recursively yields the following result.

\begin{proposition}\label{Prop:LiftSigmaSolver}
	If one can solve PLDEs in $\dfield{\A}{\sigma}$, then one can solve PLDEs in a \sigmaE-extension $\dfield{\A\lr{t_1}\dots\lr{t_e}}{\sigma}$ of $\dfield{\A}{\sigma}$
\end{proposition}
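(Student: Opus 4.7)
The plan is to proceed by induction on the number $e$ of \sigmaE-monomials in the tower. The base case $e=0$ is precisely the hypothesis that PLDEs are solvable in $\dfield{\A}{\sigma}$. For the inductive step, write $\A' := \A\lr{t_1}\dots\lr{t_{e-1}}$ and $t := t_e$, so that $\dfield{\A'\lr{t}}{\sigma}$ is a single-step \sigmaE-extension of $\dfield{\A'}{\sigma}$ with $\sigma(t)-t\in\A'$. By the induction hypothesis PLDEs can be solved in $\dfield{\A'}{\sigma}$, and it suffices to lift this capability to $\dfield{\A'\lr{t}}{\sigma}$.

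For the single-step extension I would follow verbatim the tactic recalled in the paragraph preceding the proposition. Given $\vect{0}\neq\vect{a}=(a_0,\dots,a_m)\in\A'\lr{t}^{m+1}$ and $\vect{f}\in\A'\lr{t}^d$, first invoke \cite[Theorem~7]{ABPS:21} to obtain a degree bound $b\in\N$ such that every polynomial solution $g$ of~\eqref{Equ:PLDE} satisfies $\deg_t(g)\leq b$; this step uses exactly the inductive assumption that PLDEs are solvable in $\dfield{\A'}{\sigma}$. Substituting the ansatz $g=\sum_{k=0}^{b}g_k\,t^k$ with unknowns $g_0,\dots,g_b\in\A'$ and $c_1,\dots,c_d\in\K$ into~\eqref{Equ:PLDE} and comparing the coefficient of $t^b$ produces a PLDE over $\dfield{\A'}{\sigma}$ in the unknowns $g_b,c_1,\dots,c_d$. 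Solving that sub-PLDE by the induction hypothesis yields a finite $\K$-parameterized family of candidates; plugging each candidate back reduces the original problem to a PLDE over $\A'\lr{t}$ of $t$-degree bound $b-1$. Iterating this degree reduction $b+1$ times exhausts the ansatz, and a final $\K$-linear algebra step consolidates the compatibility constraints accumulated at each level into a finite basis of $V(\vect{a},\vect{f},\A'\lr{t})$.

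The main obstacle is the degree-bound computation: the leading $t$-coefficient of the operator $\sum_{i=0}^{m}a_i\sigma^i$ is not simply $a_m$, because $\sigma$ need not strictly preserve $\deg_t$ on $\A'\lr{t}$, and subtle cancellations at the putative top degree must be controlled. This delicate analysis is exactly what \cite[Theorem~7]{ABPS:21} handles; once it is invoked, the remainder of the inductive step amounts to routine coefficient extraction, repeated application of the inductive PLDE solver over $\A'$, and finite-dimensional $\K$-linear algebra, and the induction closes.
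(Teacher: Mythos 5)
Your proposal is correct and follows essentially the same route as the paper: the paper's justification is precisely the degree-bounding via \cite[Theorem~7]{ABPS:21} (with $a=0$ for \sigmaE-monomials), the polynomial ansatz with coefficient comparison at the top degree, recursive degree reduction using the PLDE solver in the ground ring, and recursion over the tower of extensions. Your explicit induction on $e$ and the final $\K$-linear algebra consolidation match what the paper leaves implicit in the paragraph preceding the proposition.
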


For \piE-monomials one can utilize~\cite[Theorem~6]{ABPS:21} to compute the above bounds $a,b\in\Z$. If one applies this machinery recursively (as for \sigmaE-monomials) one ends up at the requirement that the ground ring is $\sigma$-computable. In a nutshell, we rediscover the ring version of Theorem~\ref{Thm:QuotientRPSRingSolver} -- but this time we solve it directly without computing first all solutions in its quotient field. 

In the following we adapt slightly the proof steps of~\cite[Theorem~6]{ABPS:21} yielding the more flexible Lemma~\ref{Lemma:PiBounds}. For its proof, we need in addition the following result.

\begin{lemma}\label{Lemma:atMostOneSol}
	Let $\dfield{\F\lr{t_1}\dots\lr{t_e}}{\sigma}$ be a \piE-ring extension of $\dfield{\F}{\sigma}$ with $\alpha_i=\frac{\sigma(t_i)}{t_i}\in\F^*$. Let $V=M(\alpha_1,\dots,\alpha_e,u)$ for some $u\in\F^*$.
	Then $V=\emptyset$ or $V=\Z(\lambda_1,\dots,\lambda_{e+1})$ for some $\lambda_i\in\Z$ with $\lambda_{e+1}>0$.
\end{lemma}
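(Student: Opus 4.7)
The plan is to view $V$ as a subgroup of the free abelian group $\Z^{e+1}$, so $V$ itself is free abelian of some finite rank $r\le e+1$. The goal is to show $r\le 1$ and, in the rank-one case, that the last coordinate of some (hence any primitive) generator can be taken positive.

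The first and crucial step is to invoke the defining property of the $\Pi$-extension. Since $\dfield{\F\lr{t_1}\dots\lr{t_e}}{\sigma}$ is a $\Pi$-extension of $\dfield{\F}{\sigma}$, the elements $\alpha_i=\sigma(t_i)/t_i$ are multiplicatively independent modulo $\sigma$-coboundaries in $\F^*$; concretely, whenever
\begin{equation*}
\alpha_1^{z_1}\cdots\alpha_e^{z_e}=\frac{\sigma(g)}{g}
\end{equation*}
for some $g\in\F^*$ and $z_i\in\Z$, one already has $z_1=\dots=z_e=0$. I would establish this by a downward induction on $i$: if $z_e\neq 0$, then $\alpha_e^{z_e}$ would be a $\sigma$-coboundary in $\F\lr{t_1}\dots\lr{t_{e-1}}$, contradicting that $t_e$ is a $\Pi$-monomial over that ring; then one iterates. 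This fact is the only genuine input from the extension structure.

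The rest is linear algebra over $\Z$. Consider the projection $\pi:V\to\Z$ onto the last coordinate. An element $(z_1,\dots,z_e,0)\in V$ witnesses $\alpha_1^{z_1}\cdots\alpha_e^{z_e}=\sigma(g)/g$ for some $g\in\F^*$, so by the previous step $z_1=\dots=z_e=0$. Hence $\ker\pi=\{\vect{0}\}$, and $\pi$ embeds $V$ isomorphically into a subgroup of $\Z$. Such a subgroup is either $\{0\}$, in which case $V=\{\vect{0}\}$ (the ``empty'' alternative of the statement), or $d\,\Z$ for some $d>0$; in the latter case the unique preimage $(\lambda_1,\dots,\lambda_e,d)\in V$ of $d$ generates $V$ as a $\Z$-module, yielding $V=\Z(\lambda_1,\dots,\lambda_e,\lambda_{e+1})$ with $\lambda_{e+1}=d>0$.

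The only real obstacle is the multiplicative independence of the $\alpha_i$ modulo coboundaries; everything else is bookkeeping for subgroups of $\Z^{e+1}$. Within the framework of this paper this fact is standard (it is implicit in Definition~\ref{Def:APSExt} of a $\Pi$-monomial and used throughout~\cite{Karr:81,DR1,DR3}), so I would either cite it or dispatch it by the short inductive argument sketched above.
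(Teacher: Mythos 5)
Your proposal is correct and follows essentially the same route as the paper: the key input in both is that a vector in $V$ with vanishing last entry must be zero (the paper cites \cite[Thm.~9.1]{Schneider:10c} for this multiplicative independence, which is exactly the fact you isolate), and the remaining step is routine. The only cosmetic difference is that the paper concludes cyclicity by combining two nonzero vectors into one with zero last entry, while you project injectively onto the last coordinate and use that subgroups of $\Z$ are $d\,\Z$ — an equivalent (arguably slightly cleaner) piece of bookkeeping.
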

\begin{proof}
	Suppose that $V\neq\emptyset$. Suppose further that we can take $\vect{0}\neq(\lambda_1,\dots,\lambda_e,0)\in V$. Then we get $g\in\F^*$ with $\frac{\sigma(g)}{g}=\alpha_1^{\lambda_1}\dots\alpha_e^{\lambda_e}$, not all $\lambda_i$ being zero, which is not possible by~\cite[Thm.~9.1]{Schneider:10c}. Consequently, for any nonzero vector in $V$ we conclude that the last entry must be nonzero.
	Now take 
	$\vect{\lambda}=(\lambda_1,\dots,\lambda_{e+1}),\vect{\mu}=(\mu_1,\dots,\mu_{e+1})\in V\setminus\{\vect{0}\}$. Then $\lambda_{e+1},\mu_{e+1}\neq0$. In particular, $\vect{a}=\mu_{e+1}\vect{\lambda}-\lambda_{e+1}\vect{\mu}\in V$. Since the last entry of $\vect{a}$ is zero, it follows that $\vect{a}=\vect{0}$. Hence two nonzero vectors are linearly dependent and it follows that $V=\{(\lambda_1,\dots,\lambda_{e+1})\}\Z$ with $\lambda_{e+1}\neq0$. If $\lambda_{e+1}<0$, we can choose the alternative generator $(-\lambda_1,\dots,-\lambda_{e+1})$ with $-\lambda_{e+1}>0$.
\end{proof}

\begin{lemma}\label{Lemma:PiBounds}
	Let $\dfield{\E}{\sigma}$ with $\E=\F\lr{t_1}\dots\lr{t_e}$ be a \piE-ring extension of $\dfield{\F}{\sigma}$ with $\alpha_i=\frac{\sigma(t_i)}{t_i}\in\F^*$.
	If one can solve the parameterized pseudo problem in $\dfield{\F}{\sigma}$ and can find all hypergeometric candidates in $\dfield{\F}{\sigma}$, one can bound the degrees of the solutions w.r.t.\ $t_e$.
\end{lemma}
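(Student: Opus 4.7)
The strategy is to adapt the leading-coefficient analysis of~\cite[Theorem~6]{ABPS:21}, replacing its $\sigma$-computability assumption by the two weaker hypotheses available here, with Lemma~\ref{Lemma:atMostOneSol} as the structural substitute. I write a candidate solution as $g=\sum_{k=a}^{b} g_k\,t_e^k$ with $g_k\in\F\lr{t_1}\dots\lr{t_{e-1}}$ and $g_b\neq 0$, set $d_{\max}=\max_i\deg_{t_e}(a_i)$, $I=\{i:\deg_{t_e}(a_i)=d_{\max}\}$, and let $u_i$ denote the leading $t_e$-coefficient of $a_i$. Since $\sigma^i(t_e^k)=\dfact{\alpha_e}{i}{\sigma}^k\,t_e^k$, equating the coefficient of $t_e^{b+d_{\max}}$ in the PLDE once $b>\max_j\deg_{t_e}(f_j)-d_{\max}$ gives the homogeneous scalar equation
\[\sum_{i\in I} u_i\,\dfact{\alpha_e}{i}{\sigma}^b\,\sigma^i(g_b)=0.\]

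Setting $r=\sigma(g_b)/g_b$ (taken in the quotient field of $\F\lr{t_1}\dots\lr{t_{e-1}}$) and $s=r\,\alpha_e^b$, the telescoping identity $\dfact{r}{i}{\sigma}\dfact{\alpha_e}{i}{\sigma}^b=\dfact{s}{i}{\sigma}$ converts the above into $\sum_{i\in I} u_i\,\dfact{s}{i}{\sigma}=0$, so that $\sigma-s$ is a right factor of $L=\sum_{i\in I}u_i\,\sigma^i$. Exploiting that the units of the iterated Laurent polynomial ring $\F\lr{t_1}\dots\lr{t_{e-1}}$ take the form $w\,t_1^{m_1}\cdots t_{e-1}^{m_{e-1}}$ with $w\in\F^*$ and $m_i\in\Z$, and invoking the hypergeometric-candidate solver in $\F$, I would extract a finite set $S\subset\F^*$ such that every admissible $s$ has the form $u\,t_1^{k_1}\cdots t_{e-1}^{k_{e-1}}\,\sigma(v)/v$ for some $u\in S$, $k_i\in\Z$ and $v\in\F^*$. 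Substituting back into $s=\sigma(g_b)/g_b\cdot\alpha_e^b$ and tracking the integer exponents, the determination of $b$ for a fixed $u\in S$ reduces to asking whether $(1,z_1,\dots,z_{e-1},-b)\in M((u,\alpha_1,\dots,\alpha_e),\F)$ for some $z_i\in\Z$. The parameterized pseudo-orbit solver computes a $\Z$-basis of this module; by Lemma~\ref{Lemma:atMostOneSol} it is either empty or singly generated with positive last coordinate, so at most one value of $b$ survives per $u\in S$. Maximising over $S$ and combining with the threshold above yields the upper degree bound; the lower bound is obtained by running the same analysis on the lowest-$t_e$-degree coefficient (equivalently, with $\alpha_e$ replaced by $\alpha_e^{-1}$).

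The main obstacle will be the hypergeometric-candidate step: the hypothesis is given only for $\dfield{\F}{\sigma}$, whereas the operator $L$ lives over the iterated $\Pi$-extension $\F\lr{t_1}\dots\lr{t_{e-1}}$. I have to show, by exploiting the monomial shape of units and the $\Pi$-property of each $t_i$, that any hypergeometric right factor of $L$ is the product of a candidate in $\F$, a Laurent monomial in $t_1,\dots,t_{e-1}$, and a boundary $\sigma(v)/v$, so that the entire search for $b$ collapses onto a single parameterized pseudo-orbit problem in $\F$ governed by Lemma~\ref{Lemma:atMostOneSol}. A secondary subtlety is that $g_b$ need not be a unit of $\F\lr{t_1}\dots\lr{t_{e-1}}$; clearing common factors and iterating the top-coefficient analysis in $t_{e-1},t_{e-2},\dots$ as needed should keep the integer-exponent bookkeeping intact.
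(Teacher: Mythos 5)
Your overall architecture (leading-term extraction, first-order right factor, hypergeometric candidates, pseudo-orbit module, Lemma~\ref{Lemma:atMostOneSol}, separate threshold from the inhomogeneous part, reversed order for the lower bound) matches the paper's proof, but there is a genuine gap exactly where you flag your ``main obstacle,'' and your proposed repair does not close it. By peeling off only the $t_e$-degree, your leading coefficient $g_b$ lives in $\A:=\F\lr{t_1}\dots\lr{t_{e-1}}$, so the element $s=\alpha_e^b\,\sigma(g_b)/g_b$ lies in $Q(\A)$ and the operator $\sum_{i\in I}u_i\sigma^i$ has coefficients in $\A$, not in $\F$. The hypothesis only provides hypergeometric candidates for operators in $\F[\sigma]$, and your claim that every hypergeometric right factor over $\A$ decomposes as (candidate in $\F$) $\times$ (Laurent monomial) $\times$ ($\sigma(v)/v$ with $v\in\F^*$) is not a consequence of the monomial shape of the units of $\A$: $g_b$ is a general Laurent polynomial, not a unit, so $\sigma(g_b)/g_b$ has no a priori monomial structure. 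Establishing that decomposition is essentially the recursive lifting result of \cite[Theorem~6]{ABPS:21} (Theorem~\ref{Thm:PiSiSolver}), i.e., precisely the machinery this lemma is meant to avoid, and it needs more than the two stated hypotheses. Your fallback of ``clearing common factors and iterating the top-coefficient analysis in $t_{e-1},t_{e-2},\dots$'' is the right instinct but changes the operator at every stage and is left entirely unjustified.

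The paper resolves this in one stroke: it orders the full monomial group $M=\{t_1^{n_1}\dots t_e^{n_e}\}$ lexicographically and takes the single leading term $\tilde{g}=h\,t_1^{\lambda_1}\dots t_e^{\lambda_e}$ with $h\in\F^*$, together with the coefficients $\tilde a_i\in\F$ of the largest monomial $\mu$ occurring in $\vect{a}$. Then $\sigma(\tilde g)/\tilde g=\frac{\sigma(h)}{h}\alpha_1^{\lambda_1}\dots\alpha_e^{\lambda_e}\in\F^*$ automatically, $\tilde L=\sigma-\alpha$ is a right factor of an operator $L\in\F[\sigma]$, the candidate set $S\subset\F$ applies directly, and the exponent vector $(\lambda_1,\dots,\lambda_e,1)$ lands in $M(\alpha_1,\dots,\alpha_e,u^{-1};\F)$, where Lemma~\ref{Lemma:atMostOneSol} pins down $\lambda_e$ uniquely for each $u\in S$. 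If you rework your argument so that the leading term is taken with respect to this lexicographic order from the outset (rather than degree in $t_e$ alone), your remaining steps go through and both of your ``obstacles'' disappear.
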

\begin{proof}
	Let $\vect{f}=(f_1,\dots,f_d)\in\E^d$ and $(a_0,\dots,a_m)\in\E^{m+1}$ with $a_0\,a_m\in\E^*$ and suppose that $g\in\E$ is a solution of~\eqref{Equ:PLDE}. Let $l_e$ be the highest degree in $g$ w.r.t.\ $t_e$.
	In the following we take the lexicographic order $<$ on $M=\{t_1^{n_1}\dots t_e^{n_e}\mid n_1,\dots,n_e\in\Z\}$ with $t_1<t_2<\dots <t_e$, and $t_i^a<t_i^b$ iff $a<b$.
	Let $\tilde{g}=h\,t_1^{\lambda_1}\dots t_e^{\lambda_e}$ be the highest term in $g$; note that $\lambda_e=l_e$.
	Further, let $\mu=t_1^{\mu_1}\dots t_e^{\mu_e}\in M$ be the largest monomial of the coefficients in $\vect{a}$, and let $\tilde{a}_i\in\F$ for $0\leq i\leq m$ be the corresponding coefficient of $\mu$; note that one of the $\tilde{a}_i$ is nonzero. Take $L:=\tilde{a_0}+\tilde{a_1}\,\sigma+\dots+\tilde{a_m}\,\sigma^m\in\F[\sigma]$.\\
	Now suppose that $L(\tilde{g})=0$ and define $\alpha=\frac{\sigma(h)}{h}\,\alpha_1^{\lambda_1}\dots\alpha_e^{\lambda_e}\in\F^*$. Note that for $\tilde{L}=\sigma-\alpha\in\F[\sigma]$ we have $\tilde{L}(\tilde{g})=0$ by construction.
	Now we follow the arguments of~\cite[Lemma~2]{ABPS:21}. 
	Let $L=Q\,\tilde{L}+R$ be the right-division of $L$ by $\tilde{L}$ with $Q\in\F[\sigma]$ and $R\in\F$.
	Since $0=L(\tilde{g})=Q\,\tilde{L}(\tilde{g})+R=R$, $\tilde{L}$ is a right-factor of $L$.
	By assumption we can compute a set $S$ which contains all hypergeometric candidates of $\tilde{L}$. Thus we can take $u\in S$ with
	$\frac{\sigma(h)}{h}\alpha_1^{\lambda_1}\dots\alpha_e^{\lambda_e}=\alpha=u\frac{\sigma(w)}{w}.$
	Consequently, we get
	$\alpha_1^{\lambda_1}\dots\alpha_e^{\lambda_e}u^{-1}=\frac{\sigma(w')}{w'}$
	for some $w'\in\F^*$. Now compute a basis $B_u$ of $V_u=M(\alpha_1,\dots,\alpha_e,u^{-1};\F)$. 
	By Lemma~\ref{Lemma:atMostOneSol} we can assume that $B_u=\{(\nu_{u,1},\dots,\nu_{u,e+1})\}\in\Z^{e+1}$ with $\nu_{u,e+1}>0$. Note that it follows even that $\nu_{u,e+1}=1$ and $l_e=\lambda_e=\nu_{u,e}$.\\ 
	Thus to bound the leading coefficient w.r.t.\ $t_e$ we proceed as follows: We loop trough all $u\in S$ and compute a basis $B_u$ of $V_u$ and take
	$$C=\max\{\nu_{u,e}\mid V_u=(\nu_{u,1},\dots,\nu_{u,e},1)\Z\text{ for }u\in S\}.$$
	Summarizing, let $l_e$ be the highest degree in the solution $g$ w.r.t.\ $t_e$ and let $\tilde{g}=ht_1^{\lambda_1}\dots t_e^{\lambda_e}$ be the highest term in $g$. If $\tilde{L}(\tilde{g})=0$. then $l_e\in C$, i.e., $C\neq\emptyset$ and $l_e\leq \max(C)$. Otherwise, if $C=\emptyset$ or $\tilde{L}(\tilde{y})\neq0$, we conclude as follows. We note that $\tilde{L}(\tilde{y})=h't_1^{\lambda_1}\dots t_e^{\lambda_e}$ for some $h'\in\F^*$. Since $\tilde{y}$ is the largest term in our solution $y$ and since $\tilde{L}$ is the contribution of the highest term in~\eqref{Equ:PLDE}, it follows by coefficient comparison in~\eqref{Equ:PLDE} that $\tilde{L}(\tilde{y})t_1^{m_1}\dots t_e^{m_e}=h't_1^{m_1+\lambda_1}\dots t_e^{m_e+\lambda_e}$ for some $h'\in\F^*$ must arise in $c_1\,f_1+\dots +c_d\,f_d$. Thus, if $d_i$ is the largest exponent in $f_i$ w.r.t.\ $t_e$, we get $m_e+\lambda_e<\max(d_1,\dots,d_e)$. In conclusion, if $C=\emptyset$, we get $l_e=\lambda_e\leq\max(d_1,\dots,d_e)-m_e=:b$. Otherwise, we conclude that $l_e=\lambda_e\leq\max(\max(C),b)$.
	Similarly, we can bound the lowest term in $g$ by repeating this procedure and taking the order $<$ with $t_1<t_2<\dots <t_e$ and $t_i^a<t_i^b$, iff $a>b$ and replacing the $\max$ operation with the $\min$ operation, etc.
\end{proof}

\noindent The above results yield, in comparison to Theorem~\ref{Thm:QuotientRPSRingSolver}, the following less general but simpler (less recursive algorithms) and more flexible (less requirements) toolbox to solve PLDEs in \pisiE-ring extensions.

\begin{theorem}\label{Thm:PiSigmaRingSolverRefined}
	Let $\dfield{\E}{\sigma}$ with $\E=\F\lr{t_1}\dots\lr{t_e}$ be a \pisiE-ring extension of a difference field $\dfield{\F}{\sigma}$ where for all \piE-monomials $t_i$ we have  $\frac{\sigma(t_i)}{t_i}\in\F^*$. If one can solve PLDEs, the parameterized pseudo-orbit problem and hypergeometric candidates in $\dfield{\F}{\sigma}$, then one can solve PLDEs  in $\dfield{\E}{\sigma}$.
\end{theorem}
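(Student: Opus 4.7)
The plan is to prove the theorem by induction on the tower height $e$, lifting solvability of PLDEs one monomial at a time while keeping the ground field $\dfield{\F}{\sigma}$ fixed (so the assumed parameterized pseudo-orbit and hypergeometric-candidate oracles always live at the base of the tower). The base case $e=0$ is precisely the hypothesis that PLDEs can be solved in $\dfield{\F}{\sigma}$. For the inductive step, set $\A := \F\lr{t_1}\cdots\lr{t_{e-1}}$ and assume PLDEs are solvable in $\dfield{\A}{\sigma}$; the goal is to upgrade this to $\dfield{\A\lr{t_e}}{\sigma}$.

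For the case in which $t_e$ is a \sigmaE-monomial, Proposition~\ref{Prop:LiftSigmaSolver} gives the desired lift directly with no additional work. If $t_e$ is a \piE-monomial, then by the hypothesis $\alpha_e=\sigma(t_e)/t_e\in\F^*$, and we follow the degree-bound/ansatz/coefficient-comparison tactic outlined just before Proposition~\ref{Prop:LiftSigmaSolver}. First, compute integers $a\le b$ such that every solution $g\in\A\lr{t_e}$ of the PLDE has the form $g=\sum_{k=a}^{b} g_k\,t_e^k$ with $g_k\in\A$; these bounds are supplied by Lemma~\ref{Lemma:PiBounds} (see below). Second, substitute this truncated ansatz into~\eqref{Equ:PLDE} and compare coefficients of $t_e^b$. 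Because $\alpha_e\in\F^*$, this coefficient comparison yields a genuine PLDE in $\dfield{\A}{\sigma}$ for $c_1,\dots,c_d$ and the leading unknown $g_b$, which can be solved by the inductive hypothesis. Plugging each basis solution back into the residual equation and iterating on $g_{b-1},\dots,g_a$ exhausts all unknowns, and assembling the resulting $\K$-linear constraints produces a basis of $V(\vect{a},\vect{f},\A\lr{t_e})$.

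The \emph{main obstacle} is the applicability of Lemma~\ref{Lemma:PiBounds}, which is stated for pure \piE-extensions, to the present mixed setting where the tower $\F\lr{t_1}\cdots\lr{t_{e-1}}$ may carry \sigmaE-monomials interleaved with \piE-monomials. The way around this is to observe that the bound derived in the proof of Lemma~\ref{Lemma:PiBounds} depends \emph{only} on the multiplicative data $\alpha_{i_1},\dots,\alpha_{i_r}\in\F^*$ attached to the \piE-monomials (packaged in the pseudo-orbit module $M(\alpha_{i_1},\dots,\alpha_{i_r},u^{-1};\F)$) and on the finite set $S\subseteq\F$ of hypergeometric candidates for a first-order operator over $\F$. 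The \sigmaE-monomials contribute only to the scalar coefficient $h\in\F\lr{t_1}\cdots\lr{t_{e-1}}$ of the leading multiplicative monomial $t_1^{\lambda_1}\cdots t_e^{\lambda_e}$ and to the right-hand side $\vect{f}$; neither affects the module- or candidate-theoretic derivation of the exponent bounds on $t_e$. Hence Lemma~\ref{Lemma:PiBounds} carries over, with at most a small book-keeping step that separates the \piE-parts of monomials from \sigmaE-polynomial factors, and the induction closes. Putting the two cases together yields a uniform algorithm that reduces any PLDE in $\dfield{\E}{\sigma}$ to finitely many PLDEs in $\dfield{\F}{\sigma}$, as required.
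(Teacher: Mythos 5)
Your induction-from-the-top strategy diverges from the paper's proof at exactly the point you flag as the ``main obstacle,'' and the workaround you sketch there does not hold up. The paper never applies Lemma~\ref{Lemma:PiBounds} over a mixed tower: it first \emph{reorders} the generators so that $\A=\F\lr{t_1}\dots\lr{t_l}$ consists of precisely the \piE-monomials and $t_{l+1},\dots,t_e$ are the \sigmaE-monomials. This reordering is legitimate precisely because of the hypothesis $\frac{\sigma(t_i)}{t_i}\in\F^*$ for every \piE-monomial (that hypothesis is there to permit the reordering, not merely to make the leading-coefficient comparison land in $\A$). Then Lemma~\ref{Lemma:PiBounds} is applied inside the pure \piE-tower $\A$, where it is actually stated, recursion solves PLDEs in $\dfield{\A}{\sigma}$, and Proposition~\ref{Prop:LiftSigmaSolver} lifts through the remaining \sigmaE-monomials in one step.

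The gap in your version is concrete. When $t_e$ is a \piE-monomial sitting on top of $\A=\F\lr{t_1}\dots\lr{t_{e-1}}$ and $\A$ contains \sigmaE-monomials, the coefficient $h$ of the extremal power $t_e^{\lambda_e}$ lies in $\A$, not in $\F$. The proof of Lemma~\ref{Lemma:PiBounds} forms $\alpha=\frac{\sigma(h)}{h}\,\alpha_1^{\lambda_1}\cdots\alpha_e^{\lambda_e}$ and the right factor $\tilde L=\sigma-\alpha\in\F[\sigma]$; but if $h$ involves a \sigmaE-monomial (say $h=t_1$ with $\sigma(t_1)=t_1+1$), then $\frac{\sigma(h)}{h}$ is not in $\F^*$ and need not even lie in $\A$, so $\tilde L$ is not an operator over $\F$, and neither the hypergeometric-candidate oracle nor the pseudo-orbit module $M(\cdot\,;\F)$ of Lemma~\ref{Lemma:atMostOneSol} can be invoked. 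Your claim that the \sigmaE-monomials ``do not affect the module- or candidate-theoretic derivation'' is therefore false as stated; making it true requires redoing the leading-term analysis with respect to a monomial order that includes the \sigmaE-generators (so that $h\in\F$) and verifying that $\sigma$ preserves such leading terms --- a genuine re-proof of the lemma in a more general setting, not ``a small book-keeping step.'' The simple fix is the paper's: move all \piE-monomials to the bottom of the tower first.
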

\begin{proof}
	By reordering we may assume that $\A=\F\lr{t_1}\dots\lr{t_l}$ contains precisely the \piE-monomials of $\E$ and that the $t_{l+1},\dots,t_e$ form all \sigmaE-monomials. By Lemma~\ref{Lemma:PiBounds} we can bound the degree of the solutions w.r.t.\ $t_l$. By iteration (recursion) we can thus solve PLDEs in $\dfield{\A}{\sigma}$. Finally, with Prop.~\ref{Prop:LiftSigmaSolver} we can solve PLDEs in $\dfield{\E}{\sigma}$.
\end{proof}

\noindent Combining Theorem~\ref{Thm:PiSigmaRingSolverRefined} with Proposition~\ref{Prop:BasicStrategyInBasicExt} yields Theorem~\ref{Thm:SpecialRPiSiSolver}.

\begin{theorem}\label{Thm:SpecialRPiSiSolver}
	Let $\dfield{\E}{\sigma}$ be an \rpisiE-ring extension of a constant-stable difference field $\dfield{\F}{\sigma}$ 
	with one \rE-monomial $y$ with $\frac{\sigma(y)}{y}\in\const{\F}{\sigma}$ of order $\lambda$ and where for each \piE-monomial $t$ in the extension $\E$ of $\F$ we have $\frac{\sigma(t)}{t}\in\F^*$. If one can solve PLDEs, solve the parameterized pseudo-orbit problem and can find all hypergeometric candidates in $\dfield{\F}{\sigma^{\lambda}}$, one can solve non-degenerated PLDEs in $\dfield{\E}{\sigma}$.
\end{theorem}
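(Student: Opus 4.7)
The plan is to obtain this result by chaining together two previously established tools: first reduce the $R\Pi\Sigma$-ring problem to a $\Pi\Sigma$-ring problem via Proposition~\ref{Prop:BasicStrategyInBasicExt}, and then solve the resulting $\Pi\Sigma$-ring problem via Theorem~\ref{Thm:PiSigmaRingSolverRefined}. The only genuine work is to check that the algorithmic hypotheses of the two theorems actually propagate correctly from the assumption on $\dfield{\F}{\sigma^{\lambda}}$ to the intermediate rings that appear in the reduction.

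First, I would bring the given \rpisiE-ring extension into the normal form~\eqref{Equ:SingleRPS}, \ie $\E=\F[y]\lr{t_1}\dots\lr{t_e}$ with the single \rE-monomial $y$ of order $\lambda$ and $\pisiE$-monomials $t_1,\dots,t_e$, and form the associated integral domain $\tilde{\E}=\F\lr{t_1}\dots\lr{t_e}$ from~\eqref{Equ:TildeE}. Since $\dfield{\F}{\sigma}$ is constant-stable and $\frac{\sigma(y)}{y}\in\const{\F}{\sigma}$, all hypotheses of Proposition~\ref{Prop:BasicStrategyInBasicExt} are in place; it reduces the task of solving non-degenerated PLDEs in $\dfield{\E}{\sigma}$ to that of solving (ordinary) PLDEs in the basic \pisiE-ring extension $\dfield{\tilde{\E}}{\sigma_0}$ of $\dfield{\F}{\sigma^{\lambda}}$, where $\sigma_0$ is defined by~\eqref{Equ:Sigmas}.

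Next, I would verify that Theorem~\ref{Thm:PiSigmaRingSolverRefined} applies to $\dfield{\tilde{\E}}{\sigma_0}$ over $\dfield{\F}{\sigma^{\lambda}}$. By Theorem~\ref{Thm:Interlacing}.(3) it is a basic \pisiE-ring extension. The key point to check is the \piE-monomial condition: for each \piE-monomial $t_i$, by assumption $\alpha_i=\frac{\sigma(t_i)}{t_i}\in\F^*$, and hence
\begin{equation*}
\tfrac{\sigma_0(t_i)}{t_i}=\tfrac{\sigma^{\lambda}(t_i)}{t_i}=\dfact{\alpha_i}{\lambda}{\sigma}=\sprodB{j=0}{\lambda-1}\sigma^j(\alpha_i)\in\F^*,
\end{equation*}
so the \piE-monomials in $\dfield{\tilde{\E}}{\sigma_0}$ still have their shift quotient inside $\F^*$. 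The remaining hypotheses of Theorem~\ref{Thm:PiSigmaRingSolverRefined} -- solvability of PLDEs, of the parameterized pseudo-orbit problem, and computability of hypergeometric candidates in the ground field $\dfield{\F}{\sigma^{\lambda}}$ -- are granted outright by the theorem's assumptions.

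Applying Theorem~\ref{Thm:PiSigmaRingSolverRefined} therefore yields a PLDE solver in $\dfield{\tilde{\E}}{\sigma_0}$, and feeding this back into Proposition~\ref{Prop:BasicStrategyInBasicExt} produces the desired solver for non-degenerated PLDEs in $\dfield{\E}{\sigma}$. The only mild obstacle is the bookkeeping in the previous paragraph: one must confirm that the \piE-monomial property, phrased for $\sigma$, survives under the iterated automorphism $\sigma^{\lambda}$; but this is immediate from the $\sigma$-factorial computation above and from the fact that $\F$ is closed under $\sigma$. No further algorithmic ingredient beyond those stated in the hypotheses is needed.
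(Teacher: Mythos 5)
Your proposal is correct and follows exactly the route the paper takes: the paper's entire proof is the one-line remark that Theorem~\ref{Thm:SpecialRPiSiSolver} follows by combining Theorem~\ref{Thm:PiSigmaRingSolverRefined} with Proposition~\ref{Prop:BasicStrategyInBasicExt}. In fact you supply more detail than the paper does, namely the explicit check that $\frac{\sigma_0(t_i)}{t_i}=\dfact{\alpha_i}{\lambda}{\sigma}\in\F^*$, so the $\Pi$-monomial hypothesis of Theorem~\ref{Thm:PiSigmaRingSolverRefined} indeed survives the passage to $\sigma^{\lambda}$.
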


\noindent Using results of~\cite{Bauer:99}, this PLDE solver is, e.g., applicable if one specializes $\F$ to the mixed multibasic case introduced in Remark~\ref{Remark:MixedCase}.

\section{Example}\label{Sec:Example}

We will illustrate the whole machinery by solving the recurrence: 
\small
\begin{align*}
&\Big[(1+n) (2+n) \big(
        \big(
                2
                +n
                +(1+n) 
                \ssumB{i=1}n \frac{1}{i}
        \big) (-1)^n
        -(1+n)^2 
        \ssumB{i=1}n \frac{(-1)^{i}}{i}
\big)\Big]\,G(n)\\[-0.1cm]
&+\Big[(1+n) (2+n)  \big(
        \big(
                2
                +n
                +2 (1+n) 
                \ssumB{i=1}n \frac{1}{i}
        \big) (-1)^n
        -(1+n) 
        \ssumB{i=1}n \frac{(-1)^{i}}{i}
\big)\Big]\,G(n+1)\\[-0.1cm]
&+\Big[(1+n)^2 (2+n) \big(
        (-1)^n 
        \ssumB{i=1}n \frac{1}{i}
        +n 
        \ssumB{i=1}n \frac{(-1)^{i}}{i}
\big)\Big]G(n+2)\\[-0.1cm]
&=
(2+n)^2
+(1+n) 
\ssumB{i=1}n \frac{1}{i}
-2 (1+n)^3 (-1)^n 
\ssumB{i=1}n \frac{(-1)^{i}}{i}.
\end{align*}
\normalsize
Internally, we represent the recurrence in the basic \rpisiE-ring extension $\dfield{\E}{\sigma}$ of $\dfield{\Q(x)}{\sigma}$ with $\E=\Q(x)[y][s][\bar{s}]$ where $\sigma(x)=x+1$, $\sigma(y)=-y$, $\sigma(s)=s+\frac{1}{x+1}$ and $\sigma(\bar{s})=\bar{s}+\frac{-y}{x+1}$. Note that $\dfield{\E}{\sigma}$ is an idempotent difference ring of order $2$ with $e_0=\frac{1-y}2$ and $e_1=\frac{1+y}2$. Then the recurrence turns into $\sum_{i=0}^2 a_i \sigma^i(g)=\varphi$ with
\small
\begin{align*}
 \vect{a}=&\Big((1+x) (2+x) (-\bar{s} (1+x)^2+(2+s+x+s x) y), (1+x) (2+x) \\[-0.2cm]
   &(-\bar{s} (1+x)+(2+x+2 s (1+x)) y),  (1+x)^2 (2+x) (\bar{s} x+s y)\Big),\\
\varphi=&s (1+x)+(2+x)^2-2 \bar{s} (1+x)^3 y.
\end{align*}
\normalsize
With Theorem~\ref{Subops} we compute with the package \texttt{HarmonicSums}~\cite{HarmonicSums} for the first component the equation $\sum_{i=0}^2b_{1,i}\sigma^{2i}(g_0)=\varphi_0$ with $\vect{b_0}=(b_{i,0},b_{i,1})$ where
\small
\begin{align*}
 \vect{b_0}=&\Big( x (29+33 x+11 x^2+x^3+2 s (6+11 x+6 x^2+x^3)+\bar{s} (6+11 x+6 x^2\\[-0.2cm]
    &+x^3t)),-x (41+49 x+18 x^2+2 x^3+4 s (6+11 x+6 x^2+x^3)+2 \bar{s} (6\\[-0.2cm]
    &+11 x+6 x^2+x^3)),x (2+x) (3+x)\biggr),\\[-0.3cm]     
\varphi_0=&\frac{-x}{(1+x) (2+x) (4+x)} \biggl(292+559 x+387 x^2+114 x^3+12 x^4+4 s(22\\[-0.2cm]
          &+53 x+45 x^2+16 x^3+2 x^4)+2 \bar{s} (22+53 x+45 x^2+16 x^3+2 x^4)\Big).
\end{align*}
\normalsize
A similar linear difference equation can be computed for the second component. Solving these equations (activating, e.g., Theorem~\ref{Thm:PiSigmaRingSolverRefined} with \texttt{Sigma}~\cite{Schneider:07a}) leads to the solutions
\begin{align*}
t_0&=s+c_1+c_2 (s+\bar{s}+2 x-4 s x-2 \bar{s} x),\\
t_1&=-s+d_1+d_2 (-s+\bar{s}-2 x+4 s x-2 \bar{s} x),
\end{align*}
for $c_1,c_2,d_1,d_2\in \Q$. Plugging $g:=e_0\,t_0+e_1\,t_1$ into $\sum_{i=0}^2 a_i \sigma^i(g)=\varphi$ gives us constraints for the constants (compare Theorem~\ref{Thm:IdempotentSolver}) and we find
$d_1=-c_1$ and $d_2=c_2$. These solutions can be combined to the general solution
$$
-s y-c_1 y +c_2 (\bar{s}-2 \bar{s} x-s y-2 x y-4 s x y),
$$
of $\sum_{i=0}^2 a_i \sigma^i(g)=\varphi$, i.e., $\{(0,y),(0,\bar{s}-2 \bar{s} x-s y-2 x y-4 s x y),(1,-s y)\}$ is a basis of
$V(\vect{a},(\varphi),\E)$. Finally, by reinterpreting the result in terms of sums and products we find the following general solution of the original recurrence:
\begin{align*}
&-\ssumB{i=1}n \frac{1}{i}(-1)^n-c_1(-1)^n\\[-0.2cm]
&\quad+c_2\Big(-2(-1)^nn-(1+4n)(-1)^n\ssumB{i=1}n \frac{1}{i}+\ssumB{i=1}n\frac{(-1)^i}{i}(1-2n)\Big).
\end{align*}

\section{Conclusion}\label{Sec:Conclusion}

We have considered idempotent difference rings (heavily used in the Galois theory of difference equations~\cite{Singer:97,Singer:08}) and derived a general toolbox to solve PLDEs in this setting. More precisely,
we introduced the notion of non-degenerated linear difference operators and showed that finding solutions for a given PLDE in difference rings with zero-divisors can be reduced to finding solutions in difference rings that are integral domains (see Theorems~\ref{Subops} and~\ref{Thm:IdempotentSolver}). In the second part of this article we provided two general PLDE solvers: Theorem~\ref{Thm:QuotientRPSRingSolver} for the most general case which assumes that rather strong properties hold in the ground field and Theorem~\ref{Thm:SpecialRPiSiSolver} which is less general, but where some of the complicated algorithmic assumptions can be dropped. In both cases, the inner core (Theorem~\ref{Thm:PiSiSolver}) is a PLDE solver for \pisiE-field extensions that has been elaborated in~\cite{ABPS:21} and implemented within \texttt{Sigma}.

Our notion of non-degenerated operators is motivated by our method to decompose the desired solution. An interesting question is if there are equivalent (or even more flexible definitions) that are easier to verify. We also indicated that the decomposition method (implemented in the package \texttt{HarmonicSums}) works partially if the operator is degenerated. Further investigations in this direction, also connected to the dimension of the solution space, would be highly interesting. Finally, we are strongly motivated to generalize our PLDE solver summarized in  Theorem~\ref{Thm:SpecialRPiSiSolver} further to more general classes of (basic) \rpisiE-ring extensions.


\end{document}